 \theoremstyle{plain}
 \newtheorem{thm}{Theorem}
 \newtheorem{lem}{Lemma}
 \newtheorem{cor}[thm]{Corollary}
 \theoremstyle{definition}
 \newtheorem{defn}{Definition}
 \newtheorem{rem}{Remark}
 \newtheorem{assm}{Assumption}
\newcommand{\bb}[0]{\begin{bmatrix}}
\newcommand{\eb}[0]{\end{bmatrix}}
\newcommand{\be}[0]{\begin{equation}}
\newcommand{\ee}[0]{\end{equation}}
\newcommand{\ben}[0]{\begin{equation*}}
\newcommand{\een}[0]{\end{equation*}}
\newcommand{\enVert}[1]{\left\lVert#1\right\rVert}
\newcommand{\norms}[1]{\lVert#1\rVert}
\let\norm=\normB
\newcommand{\abs}[1]{\left\lvert#1\right\rvert}
\renewcommand{\Re}[0]{\mathbb R}
\renewcommand{\exp}[1]{{\bf e}^{#1}}
\newcommand{\norml}[2]{\norm{#1}_{L_{#2}}}
\title{\vspace{18pt} \bf \Large On Adaptive Control with Closed-loop Reference Models:\\ Transients, Oscillations, and Peaking }
\author{Travis~E.~Gibson, Anuradha M. Annaswamy and Eugene Lavretsky
\thanks{T.~E. Gibson and A. M. Annaswamy are with the Department
of Mechanical Engineering, Massaschusetts Institute of Technology, Cambridge,
MA, 02139 e-mail: ({tgibson@mit.edu}).}
\thanks{E.~Lavretsky is with the Boeing Company, Huntington Beach CA, 92648.}}
\begin{document}

\maketitle

\begin{abstract}
One of the main features of adaptive systems is an oscillatory convergence that exacerbates with the speed of adaptation.  Recently it has been shown that Closed-loop Reference Models (CRMs) can result in improved transient performance over their open-loop counterparts in model reference adaptive control. In this paper, we quantify both the transient performance in the classical adaptive systems and their improvement with CRMs. In addition to deriving bounds on L-2 norms of the derivatives of the adaptive parameters which are shown to be smaller, an optimal design of CRMs is proposed which minimizes an underlying peaking phenomenon. The analytical tools proposed are shown to be applicable for a range of adaptive control problems including direct control and  composite control with observer feedback. The presence of CRMs in adaptive backstepping and adaptive robot control are also discussed. Simulation results are presented throughout the paper to support the theoretical derivations.
\end{abstract}

\section{Introduction}

A universal observation in all adaptive control systems is a convergent, yet oscillatory behavior in the underlying errors.  These oscillations increase with adaptation gain, and as such, lead to constraints on the speed of adaptation. The main obvious challenge in quantification of transients in adaptive systems stems from their nonlinear nature. A second obstacle is the fact that most adaptive systems possess an inherent trade-off between the speed of convergence of the tracking error and the size of parametric uncertainty. In this paper, we overcome these long standing obstacles by proposing an adaptive control design that judiciously makes use of an underlying linear time-varying system, analytical tools that quantify oscillatory behavior in adaptive systems, and the use of tools for decoupling speed of adaptation from parametric uncertainty.

The basic premise of any adaptive control system is to have the output of a plant follow a prescribed reference model through the online adjustment of control parameters. Historically, the reference models in Model Reference Adaptive Control (MRAC) have been open-loop in nature (see for example, \cite{annbook,ioabook}), with the reference trajectory generated by a linear dynamic model, and unaffected by the plant output. The notion of feeding back the model following error into the reference model was first reported in \cite{lee97}  and more recently in \cite{eug10aiaa,lav12tac,ste10,ste11,gib12,gib13acc1,gib13acc2,gib13ecc}. Denoting the adaptive systems with an Open-loop Reference Model as ORM-adaptive systems and those with closed-loop reference models as CRM-adaptive systems, the design that we propose in this paper to alleviate transients in adaptive control is CRM-based adaptation. 

Following stability of adaptive control systems in the 80s and their robustness in the 90s, several attempts have been made to quantify transient performance (see for example,\cite{krs93,dat94,zan94}).  The performance metric of interest in these papers stems from either supremum or L-2 norms of key errors within the adaptive system.  In \cite{krs93} supremum and L-2 norms are derived for the model following error, the filtered model following error and the zero dynamics. In \cite{dat94} L-2 norms are derived for the the model following error in the context of output feedback adaptive systems in the presence of disturbances and un-modeled dynamics. The authors of \cite{zan94} analyze the interconnection structure of adaptive systems and discuss scenarios under which  key signals can behave poorly.

%


In addition to references \cite{krs93,dat94,zan94}, transient performance in adaptive systems has been addressed in the context of CRM adaptive systems in \cite{eug10aiaa,lav12tac,ste10,ste11,gib12,gib13acc1,gib13acc2,gib13ecc}. The results in \cite{eug10aiaa,lav12tac} focused on the tracking error, with emphasis mainly on the initial interval where the CRM-adaptive system exhibits fast time-scales. In \cite{ste10} and \cite{ste11}, transient performance is quantified using a damping ratio and natural frequency type of analysis. However, assumptions are made that the initial state error is zero and that the closed-loop system state is independent of the feedback gain in the reference model, both of which may not hold in general. 


In this paper, we start with CRM adaptive systems as the design candidate, and quantify the underlying transient performance. This is accomplished by deriving L-2 bounds on key signals and their derivatives in the adaptive system. These bounds are then related to the corresponding frequency content using a Fourier analysis, thereby leading to an analytical basis for the observed reduction in oscillations with the use of CRM. It is also shown that in general, a peaking phenomenon can occur with CRM-adaptive systems, which then is shown to be minimized through an appropriate design of the CRM-parameters. Extensive simulation results are provided, illustrating the conspicuous absence of oscillations in CRM-adaptive systems in contrast to their dominant presence in ORM-adaptive systems.  The results of this paper build on preliminary versions in \cite{gib12,gib13acc1,gib13acc2} where the bounds obtained were conservative. While all results derived in this paper are applicable to plants whose states are accessible for measurement, we refer the reader to  \cite{gib13ecc} for extensions to output feedback. 

This paper also addresses Combined/composite direct and indirect Model Reference Adaptive Control (CMRAC)  \cite{duarte1989combined,slotine1989composite}, which is another class of adaptive systems in which a noticeable improvement in transient performance was demonstrated. While the results of
these papers established stability of combined schemes, no
rigorous guarantees of improved transient performance were
provided, and have remained a conjecture \cite{eugeneTAC09}. We
introduce CRMs into the CMRAC and show how improved transients can be guaranteed. We close this paper with a discussion of CRM and related concepts that appear in other adaptive systems as well, including nonlinear adaptive control
\cite{kkkbook} and adaptive control in robotics \cite{slobook}.

This paper is organized as follows. Section II contains the
basic CRM structure with L-2 norms of the key signals in the
system. Section III investigates the peaking in the reference
model. Section IV contains the multidimensional states accessible extension. Section V investigates composite control structures with CRM. Section VI
explores other forms of adaptive control where closed
loop structures appear.

\section{CRM-Based Adaptive Control of Scalar Plants}\label{sec:crm}
Let us begin with a scalar system,
\be\label{eqd:plant}
\dot x_p(t) = a_p x_p(t) + k_p u(t) 
\ee
where ${x_p(t)\in \Re}$ is the plant state, ${u(t)\in \Re}$ is the control input, ${a_p\in\Re}$ is an unknown scalar and only the sign of ${k_p\in\Re}$ is known. 
We choose a closed-loop reference model as 
\be\label{eqd:reference}
\dot x_m(t) =  a_m x_m(t) + k_m r(t) -\ell (x(t)-x_m(t)) . 
\ee
All of the parameters above are known and scalar, $x_m(t)$ is the reference model state, $r(t)$ is a bounded reference input and $a_m,\ell<0$ so that the reference model and the subsequent error dynamics are stable. The open-loop reference model dynamics
\be\label{orm}
 \dot x_m^o(t) = a_m x_m^o(t) + k_m r(t)
\ee
is the corresponding true reference model that we actually want $x_p$ to converge to.

The control law is chosen as
\be\label{eqd:controller} u(t) = \bar\theta^T(t) \phi(t) \ee
where we have defined
${ \bar\theta^T(t) = \bb \theta(t) & k(t)\eb^T}$ and  ${\phi^T(t)=\bb x_p(t) & r(t)\eb^T}$
with an update law
\be\label{eqd:update}
\dot{\bar \theta}=  -\gamma \text{sgn}(k_p)e \phi 
\ee
where $\gamma>0$ is a free design parameter commonly referred to as the adaptive tuning gain and $e(t)=x_p(t)-x_m(t)$ is the state tracking error. From this point forward we will suppress the explicit time dependance of parameters accept for emphasis.

We  define the  parameter error
${\tilde{\bar\theta}(t)= \bar\theta(t)-\bar\theta^*}$, where ${\bar\theta^*\in\Re^2}$ satisfies  ${\bar\theta^{*T} = \bb \frac{a_m-a_p}{k_p} & \frac{k_m}{k_p}\eb^T}$.
The corresponding closed loop error dynamics are:
\be\label{eqd:error}
\dot e(t) =  (a_m+\ell) e + k_p \tilde {\bar\theta}^T \phi.
\ee

\subsection{Stability Properties of CRM-adaptive systems}
Theorem \ref{thm1} establishes the stability of the above adaptive system with the CRM:
\begin{thm}\label{thm1}
The adaptive system with the plant in \eqref{eqd:plant}, with the controller defined by \eqref{eqd:controller}, the update law in \eqref{eqd:update} with the reference model as in \eqref{eqd:reference} is globally stable, $\lim_{t\to\infty} e(t) = 0$, and
\be\label{l2e}
\norml{e}{2}^2 \leq  \frac{1}{\abs{a_m+\ell}} \left(\frac{1}{2} e(0)^2 + \frac{\abs{k_p} }{2\gamma}\tilde{\bar\theta}^T(0) \tilde{\bar\theta}(0) \right).
\ee
 \end{thm}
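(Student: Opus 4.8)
The plan is to use a Lyapunov argument, which is the standard approach for establishing stability of adaptive systems. I would define the candidate Lyapunov function that combines the tracking error and the parameter error, namely
\be\label{plan:lyap}
V(e,\tilde{\bar\theta}) = \frac{1}{2}e^2 + \frac{\abs{k_p}}{2\gamma}\tilde{\bar\theta}^T\tilde{\bar\theta}.
\ee
This choice is natural because the two terms are scaled so that the cross terms arising from $\dot V$ will cancel against each other. The first step is to differentiate $V$ along the trajectories of the closed-loop error dynamics \eqref{eqd:error} and the update law \eqref{eqd:update}, noting that $\dot{\tilde{\bar\theta}} = \dot{\bar\theta}$ since $\bar\theta^*$ is constant.

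The second step is the key cancellation. Substituting \eqref{eqd:error} gives $e\dot e = (a_m+\ell)e^2 + k_p e\,\tilde{\bar\theta}^T\phi$, while the parameter term contributes $\frac{\abs{k_p}}{\gamma}\tilde{\bar\theta}^T\dot{\bar\theta} = -\abs{k_p}\,\text{sgn}(k_p)\,e\,\tilde{\bar\theta}^T\phi = -k_p e\,\tilde{\bar\theta}^T\phi$, using $\abs{k_p}\,\text{sgn}(k_p)=k_p$. The two indefinite cross terms cancel exactly, leaving
\be\label{plan:vdot}
\dot V = (a_m+\ell)e^2 = -\abs{a_m+\ell}\,e^2 \leq 0,
\ee
where the sign follows from $a_m,\ell<0$. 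Negative semidefiniteness of $\dot V$ establishes that $V$ is bounded and nonincreasing, hence $e$ and $\tilde{\bar\theta}$ (and therefore $\bar\theta$) are bounded, which together with boundedness of $r$ gives boundedness of all signals and thus global stability.

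The third step is to obtain the L-2 bound \eqref{l2e} by integrating \eqref{plan:vdot} from $0$ to $\infty$: $\abs{a_m+\ell}\int_0^\infty e^2\,dt = V(0) - V(\infty) \leq V(0)$, since $V(\infty)\geq 0$ exists by monotone convergence. Rearranging and writing out $V(0)$ yields exactly the claimed bound on $\norml{e}{2}^2$. Finally, to conclude $\lim_{t\to\infty}e(t)=0$ I would invoke Barbalat's lemma: from the established boundedness, $\dot e$ in \eqref{eqd:error} is bounded, so $e$ is uniformly continuous, and since $e\in L_2$ (equivalently $\int_0^\infty e^2\,dt<\infty$) it follows that $e\to 0$.

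I expect the main subtlety, rather than any analytical obstacle, to be the sign bookkeeping in the cancellation step, namely verifying that the $\text{sgn}(k_p)$ in the update law combines with $\abs{k_p}$ to reproduce $k_p$ with the correct sign regardless of the sign of $k_p$; this is precisely what makes the indefinite cross terms vanish and is the reason the $\abs{k_p}$ weighting appears in $V$. The presence of the feedback term $\ell$ is benign here: it simply strengthens the decay rate from $\abs{a_m}$ to $\abs{a_m+\ell}$ and tightens the L-2 bound accordingly, so the CRM structure does not complicate the stability argument itself.
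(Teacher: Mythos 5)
Your proposal is correct and follows essentially the same route as the paper: the identical Lyapunov candidate $V = \tfrac{1}{2}e^2 + \tfrac{\abs{k_p}}{2\gamma}\tilde{\bar\theta}^T\tilde{\bar\theta}$, the computation $\dot V = (a_m+\ell)e^2 \leq 0$, integration of $-\dot V$ to obtain the $L_2$ bound, and Barbalat's lemma for $e(t)\to 0$. The only difference is that you spell out the sign-cancellation step ($\abs{k_p}\,\text{sgn}(k_p)=k_p$) that the paper leaves implicit, which is a useful clarification but not a different argument.
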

\begin{proof}
Consider the lyapunov candidate function
\ben
V(e(t),\tilde\theta(t)) = \frac{1}{2} e^2 + \frac{\abs{k_p} }{2\gamma} \tilde{\bar\theta}^T\tilde{\bar\theta}. 
\een
Taking the time derivative of $V$ along the system directions we have 
$\dot V = (a_m+\ell) e^2 \leq 0$. Given that $V$ is positive definite and $\dot V$ is negative semi-definite we have that ${V(e(t),\tilde{\bar\theta}(t)\leq V(e(0),\tilde{\bar \theta}(0))<\infty}$.
Thus $V$ is bounded and this means in tern that $e$ and $\tilde{\bar\theta}$ are bounded, with 
\be\label{elinf}
\norml{e(t)}\infty^2 \leq 2 V(0).
\ee

Given that $r$ and $e$ are bounded and the fact that $a_m<0$, the reference model is stable. Thus we can conclude $x_m$, and therefore $x_p$, are bounded. Given that $\bar\theta^*$ is a constant we can conclude that $\bar\theta$ is bounded from the boundedness of $\tilde{\bar\theta}$. This can be compactly stated as $e, x_p, \tilde{\bar\theta}, \bar\theta \in \mathcal L_\infty$, and therefore all of the signals in the system are bounded.

In order to prove asymptotic stability in the error we begin by noting that $
- \int_0^t \dot V = V(e(0),\tilde\theta(0)) - V(e(t),\tilde\theta(t)) \leq V(e(0),\tilde\theta(0)) $. This in turn can be simplified as ${\abs{a_m+\ell} \int_0^t e(t)^2 \leq  V(0)}\  \forall \  {t\geq 0}
$. Dividing by $\abs{a_m+\ell}$ and taking the limit as $t\to\infty$ we have
\be\label{el2}
\norml{e}{2}^2 \leq  \frac{V(0)}{\abs{a_m+\ell}}
\ee
which implies \eqref{l2e}. Given that $e\in \mathcal L_2 \cap \mathcal L_\infty$ and $\dot e \in \mathcal L_\infty$, Barbalat's Lemma is satisfied and therefore $\lim_{t \to \infty}e(t)=0$ \cite{annbook}.
\end{proof}

Theorem \ref{thm1} clearly shows that CRM ensures stability of the adaptive system. Also, from the fact that $e\in\mathcal L_2$ we have that $x_m(t) \to x_m^o(t)$ as $t\to\infty$  \cite[\S IV.1, Theorem 9(c)]{desbook}. That is, the closed-loop reference model asymptotically converges to the open-loop reference model, which is our true desired trajectory. The questions that arise then is one, if any improvement is possible in the transient response with the inclusion of $\ell$, and second, how close is $x_m(t)$, the response of the CRM in relation to that of the original reference model, $x_m^o(t)$. These are addressed in the following section.

\subsection{Transient Performance of CRM-adaptive systems}

The main impact of the CRM is a modification in the realization of the reference trajectory, from the use of a linear model to a nonlinear model. This in turn produces a more benign target for the adaptive closed-loop system to follow, resulting in better transients. It could be argued that the reference model meets the closed-loop system half-way, and therefore reduces the burden of tracking on the adaptive system and shifts it partially to the reference model. In what follows, we precisely quantify this effect.

As Equation \eqref{l2e} in Theorem \ref{thm1} illustrates, the L-2 norm of $e$ has two components, one associated with the initial error in the reference model, $e(0)$, and the other with the initial error in the parameter space, $\tilde{\bar\theta}(0)$. The component associated with $\tilde{\bar\theta}(0)$ is inversely proportional to the product ${\gamma \abs{\ell}}$ and the component associated with the initial model following error $e(0)$ is inversely proportional to $\abs{\ell}$ alone. Therefore, without the use of the feedback gain $\ell$ it is not possible to uniformly decrease the L-2 norm of the model following error. This clearly illustrates the advantage of using the CRM over the ORM, as in the latter, $\ell=0$.

While CRM-adaptive systems bring in this obvious advantage, they can also introduce an undesirable peaking phenomenon. In what follows, we introduce a definition and show how through a proper choice of the gain $\ell$, this phenomenon can be contained, and lead to better bounds on the parameter derivatives.
As mentioned in the introduction, we quantify transient performance in this paper by deriving L-2 bounds on the parameter derivative $\dot{\bar\theta}$, which in turn will correlate to bounds on the amplitude of frequency oscillations in the adaptive parameters. For this purpose, we first discuss the L-2 bound on $e$ and supremum bound for $x_m$. We then describe a peaking phenomenon that is possible with CRM-adaptive systems.



\subsubsection{L-$\infty$ norm of $x_m$}
The solution to the ODE in \eqref{eqd:reference} is 
\be\label{xm1}\begin{split}
x_m(t) =& \exp{a_m t} x_m(0) + \int_0^t \exp{a_m(t-\tau)} r(\tau) d\tau  \\  & - \ell \int_0^t \exp{a_m(t-\tau)} e(\tau) d\tau.
\end{split}\ee
The solution to the ODE in \eqref{orm} is 
\be
x_m^o(t)=\exp{a_m t} x_m^o(0) + \int_0^t \exp{a_m(t-\tau)} r(\tau) d\tau.
\ee
For ease of exposition and comparison, $x_m(0) = x_m^o(0)$ and thus
\be\label{xmxmo}
{x_m(t)} =  {x_m^o(t)} -\ell \int_0^t \exp{a_m(t-\tau)}{e(\tau)} d\tau.
\ee
Denoting the difference between the open-loop and closed-loop reference model as $
{\Delta x_m = x_m-x_m^o}$, using  Cauchy Schwartz Ineqality on ${\int_0^t \exp{a_m(t-\tau)} \norm{e(\tau)} d\tau}
$, and the bound for $\norml{e(t)} 2$ from \eqref{el2}, we can conclude that
\be\label{xm3}
\norm{\Delta x_m(t)} \leq \abs{\ell} \sqrt{\frac{1}{\abs{2 a_m}}} \sqrt{\frac{V(0)}{\abs{a_m+\ell}}}.
\ee

We quantify the peaking phenomenon through the following definition:
\begin{defn}
Let $\alpha\in \Re^+$, $a_1,a_2$ are fixed positive constants, $x: \Re^+\times\Re^+ \to\Re$ and $x(\alpha;t)\in \mathcal L_2$,
\ben
y(\alpha;t) \triangleq \alpha \int_0^t \exp{-a_1(t-\tau)} x(\alpha;\tau) d\tau,
\een
Then the signal $y(\alpha;t)$ is said to have a peaking exponent $s$ 
with respect to $\alpha$ if\ben
\norml{y(t)}\infty \leq a_1 \alpha^s+a_2.
\een
\end{defn}

\begin{rem}We note that this definition of peaking differs from that of peaking for linear systems given in \cite{sus91}, and references there in. In these works, the underlying peaking behavior corresponds to terms of the form $\kappa \exp{-\alpha t}$, $\alpha,\kappa > 0$, where any increase in $\alpha$ is accompanied by a corresponding increase in $\kappa$ leading to peaking. This can occur in linear systems where the Jacobian is defective \cite{mol03}. In contrast, the peaking of interest in this paper occurs in adaptive systems where efforts to decrease the $L_2$ norm of $x$ through the increase of $\alpha$ leads to the increase of $y$ causing it to peak. This is discussed in detail below.
\end{rem}

From Eq. (12), it follows that $\Delta x_m$ can be equated with $y$ and $e$ with $x$ in 
Definition 1. Expanding $V(0)$, the bound on $\Delta x_m(t)$ in  (13) can be represented as
\ben\norm {\Delta x_m(t)} \leq b_1 |\ell|^{1/2} + b_2\left(\frac{\abs \ell}{\gamma}\right)^{1/2} \een
where $
b_1= \sqrt{\frac{e(0)^2}{2|a_m|}}$ and $b_2= \sqrt{\frac{\norm{\tilde{\bar{\theta}}(0)}^2}{2|a_m|}}$. We note that $\gamma$ is a free design parameter in the adaptive system. Therefore, one can choose $\gamma=\abs{\ell}$ and achieve the bound
\be\label{tr1}\norm {\Delta x_m(t)} \leq b_1 |\ell|^{1/2} + b_2.\ee
From \eqref{tr1} and Definition 1, it follows that with $\gamma=O(| \ell |)$, 
 $\Delta x_m$ has a peaking exponent of 0.5 with respect to $|\ell |$. Similar to \eqref{tr1} the following bound holds for $x_m$:
\be\label{tr2}
\norml{ x_m(t)}\infty \leq b_1 |\ell|^{1/2} + b_3
\ee
where $b_3 = b_2+\norml{x_m^o(t)}\infty$ and $\gamma=\abs \ell$. That is the bounds in \eqref{tr1} and \eqref{tr2} increase with $| \ell |$, which implies that $\Delta x_m(t)$ and 
therefore $x_m(t)$ can exhibit peaking. 

While it is tempting to
simply pick ${e(0) = 0}$ so that $b_1=0$, as is suggested in [6], [7] to circumvent
this problem, it is not always possible to do so, as $x(0)$ may not be available as a measurement because of noise or disturbance that may be present. In Section III, we present an approach where tighter bounds for $x_m(t)$ are derived, which enables us to reduce the peaking exponent of $\Delta x_m$ from 0.5 to zero.


Before moving to the L-2 bounds on $\dot k$ and $\dot \theta$, we motivate the importance of L-2 bounds on signal derivatives and how they relate to the frequency characteristic of the signals of interest. We use a standard property of Fourier series and continuous functions \cite{lax book,rud76} summarized in Lemma \ref{lemmer} and Theorem \ref{thm:p} below:
\begin{lem}\label{lemmer}
Consider a periodic signal ${f(t)\in \Re}$ over a finite interval ${\mathbb T = [t_1, t_1+\tau]}$ where $\tau$ is the period of $f(t)$. The Fourier coefficients of $f(t)$ are then given by ${F(n)=\frac{1}\tau\int_{\mathbb T} f(t) \exp{-i \omega(n)t} dt}$ with $\omega(n)=2\pi n/\tau$. 
If $f(t) \in \mathcal C^1$, given $\epsilon_1>0$, there exists an integer $N_1$ such that 
\begin{itemize}
\item[(i)]
$
\quad \enVert{f(t)-\sum_{n = -{N_1}}^{N_1} F(n) \exp{i \omega(n) t}} \leq \epsilon_1,\qquad \forall t\in \mathbb T.
$ \end{itemize}
If in addition $f(t)\in \mathcal C^2$, then for all $\epsilon_2>0$ there exists an integer $N_2$ such that
\begin{itemize}
\item[(ii)]$
\enVert{\frac{d}{dt} f(t)- \sum_{n = -{N_2}}^{N_2}  i\omega(n) F(n) \exp{i \omega(n) t}} \leq \epsilon_2, \forall t\in \mathbb T.
$\end{itemize}
\end{lem}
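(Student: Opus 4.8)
The plan is to prove both parts by showing that the Fourier coefficients of a sufficiently smooth periodic function decay fast enough to be absolutely summable, which forces the uniform convergence of the corresponding series; part (ii) will then follow by applying part (i) to the derivative $f'$.

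For part (i), I would first exploit the smoothness of $f$ through integration by parts over one period. Since $f \in \mathcal C^1$ and is periodic on $\mathbb T$, the boundary terms cancel and I obtain the key identity $F(n) = G(n)/(i\omega(n))$ for $n \neq 0$, where $G(n) = \frac{1}{\tau}\int_{\mathbb T} f'(t)\exp{-i\omega(n)t}\,dt$ are the Fourier coefficients of $f'$. Because $f'$ is continuous on the compact interval $\mathbb T$ it is square integrable, so Bessel's inequality gives $\sum_n |G(n)|^2 < \infty$. Applying the Cauchy--Schwarz inequality to $\sum_{n\neq 0}|F(n)| = \sum_{n\neq 0} |G(n)|/|\omega(n)|$ and using that $\sum_{n\neq 0} 1/\omega(n)^2 = (\tau^2/4\pi^2)\sum_{n\neq 0} 1/n^2$ converges, I conclude $\sum_n |F(n)| < \infty$.

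Absolute summability of the coefficients is the crux. By the Weierstrass M-test the series $\sum_n F(n)\exp{i\omega(n)t}$ then converges uniformly on $\mathbb T$ to a continuous limit, and since $f$ is itself continuous and the Fourier coefficients determine it uniquely, that limit is $f$. Uniform convergence is exactly the statement that the symmetric partial sums obey the tail bound in (i): given $\epsilon_1 > 0$ one selects $N_1$ so that the supremum over $\mathbb T$ of the remainder is at most $\epsilon_1$.

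For part (ii), I would apply part (i) to $g = f'$. When $f \in \mathcal C^2$ the function $g$ lies in $\mathcal C^1$ and is periodic, so part (i) yields uniform convergence of the Fourier series of $g$ to $g = f'$; since the coefficients of $g$ are $i\omega(n)F(n)$ by the same integration-by-parts identity, that series is precisely $\sum_n i\omega(n)F(n)\exp{i\omega(n)t}$, which is (ii) with the appropriate $N_2$. The main obstacle I anticipate is the careful bookkeeping of the periodicity hypothesis so that the boundary terms in the integration by parts genuinely vanish, together with the identification of each uniform limit with $f$ (respectively $f'$) rather than with some unrelated continuous function; this last point rests on the uniqueness of Fourier coefficients for continuous functions, which I would invoke as a standard fact \cite{rud76}.
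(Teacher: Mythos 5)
Your proof is correct, but note that the paper does not actually prove Lemma~\ref{lemmer} at all: it is stated as ``a standard property of Fourier series and continuous functions'' and supported only by citations to the literature (Lax, Rudin). What you have supplied is the standard textbook argument that the paper implicitly leans on: integration by parts (with the boundary terms cancelling by periodicity) gives $F(n)=G(n)/(i\omega(n))$, Bessel's inequality gives square-summability of the coefficients of $f'$, Cauchy--Schwarz then gives absolute summability of the $F(n)$, the Weierstrass M-test gives uniform convergence, and uniqueness of Fourier coefficients for continuous functions identifies the limit with $f$; part (ii) follows by applying part (i) to $f'\in\mathcal C^1$, whose coefficients are $i\omega(n)F(n)$. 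All the steps you flag as delicate (vanishing of boundary terms, identification of the uniform limit) are handled correctly, and your reduction of (ii) to (i) is exactly the right move. The only small caveat is that the hypothesis $f\in\mathcal C^1$ (resp.\ $\mathcal C^2$) must be read as smoothness of $f$ as a periodic function, so that $f'$ is itself continuous and periodic across the endpoint of $\mathbb T$; with that reading your argument is complete and strictly more informative than the paper's citation.
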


\begin{rem} We note that one can use the notion of generalized functions as presented in \cite{lig58} in order to obtain Fourier approximations with assumptions of the interval being finite and periodicity relaxed.
\end{rem}

\begin{thm} \label{thm:p}If $f(t)\in\mathcal C^2$ and periodic with period $\tau$, then the following equality holds
\be\label{dd1}
\int_{\mathbb T} \norm{\dot f(t)}^2 dt= \sum_{n=-\infty}^{\infty}  \abs{F(n)}^2  \abs{\omega(n) 2 \pi n}
\ee
where ${F(n)=\frac 1 \tau \int_{\mathbb T} f(t) \exp{-i \omega(n)t} dt}$, $\omega(n)=2\pi n/\tau$ and $\mathbb T = [t_1, t_1+\tau]$.
\end{thm}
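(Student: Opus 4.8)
The plan is to recognize \eqref{dd1} as nothing more than Parseval's identity applied to the derivative $\dot f$ rather than to $f$ itself. Since $f\in\mathcal C^2$ and is $\tau$-periodic, its derivative $\dot f$ is a continuous $\tau$-periodic function, and by Lemma \ref{lemmer}(ii) the series $\sum_n i\omega(n) F(n)\exp{i\omega(n)t}$ approximates $\dot f$ uniformly on $\mathbb T$. The first task is therefore to identify the Fourier coefficients of $\dot f$; I claim they are exactly $i\omega(n)F(n)$.

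To establish this I would compute $\frac1\tau\int_{\mathbb T}\dot f(t)\exp{-i\omega(n)t}dt$ by integration by parts. The boundary term $\frac1\tau[f(t)\exp{-i\omega(n)t}]_{t_1}^{t_1+\tau}$ vanishes because both $f$ and $\exp{-i\omega(n)t}$ are $\tau$-periodic (note that $\omega(n)\tau=2\pi n$), leaving $\frac{i\omega(n)}\tau\int_{\mathbb T}f(t)\exp{-i\omega(n)t}dt=i\omega(n)F(n)$, as claimed.

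The core step is then Parseval's relation. Using the orthogonality of the exponentials, $\frac1\tau\int_{\mathbb T}\exp{i\omega(m)t}\exp{-i\omega(n)t}dt=\delta_{mn}$, I would write $\int_{\mathbb T}\norm{\dot f(t)}^2 dt=\int_{\mathbb T}\dot f(t)\overline{\dot f(t)}\,dt$, substitute the uniformly convergent series for $\dot f$ obtained above, and interchange summation and integration. Only the diagonal terms survive, giving $\int_{\mathbb T}\norm{\dot f(t)}^2 dt=\tau\sum_n\abs{i\omega(n)F(n)}^2=\tau\sum_n\omega(n)^2\abs{F(n)}^2$.

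Finally I would rewrite the coefficient $\tau\omega(n)^2$ in the form appearing in \eqref{dd1}: since $\omega(n)=2\pi n/\tau$, we have $\tau\omega(n)^2=(2\pi n)^2/\tau=(2\pi n/\tau)(2\pi n)=\omega(n)\,2\pi n$, and as $\omega(n)$ and $2\pi n$ share the same sign this product equals $\abs{\omega(n)\,2\pi n}$; substituting yields \eqref{dd1}. The main obstacle is the rigorous justification of the term-by-term integration, i.e. the interchange of the infinite sum with the integral, which is precisely what the $\mathcal C^2$ hypothesis buys us: it guarantees, via Lemma \ref{lemmer}, the uniform convergence of the Fourier series of $\dot f$ on the compact interval $\mathbb T$, so that the limit may be passed through the integral without affecting equality.
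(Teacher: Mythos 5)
Your proposal is correct and follows essentially the same route as the paper: both invoke the uniform convergence of the Fourier series of $\dot f$ guaranteed by Lemma \ref{lemmer}(ii), use orthogonality of the exponentials to reduce the integral to the diagonal terms (Parseval), and then rewrite $\tau\,\omega(n)^2$ as $\omega(n)\,2\pi n$. Your explicit integration-by-parts verification that the Fourier coefficients of $\dot f$ are $i\omega(n)F(n)$ is a small added justification the paper leaves implicit, but the argument is the same.
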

\begin{proof}
This follows from Parseval's Theorem. From Lemma \ref{lemmer}(ii), 
$\int_{\mathbb T} \norm{\dot f(t)}^2dt = \int_{\mathbb T} \enVert{ \sum_{n=-\infty}^{\infty}i\omega(n) F(n) \exp{i \omega(n) t}}^2 dt.$
Using the orthogonality of $\exp{i\omega(n)t}$ we have that ${\int_{t_1}^{t_1+\tau} \exp{i \omega(n) t}\exp{-i \omega(m) t} dt=0}$ for all integers ${m\neq n}$. It also trivially holds that $\int_{t_1}^{t_1+\tau} \exp{i \omega(n) t}\exp{-i \omega(n) t}dt = \tau$. Using these two facts along with the fact that the convergence in Lemma \ref{lemmer}(ii) is uniform, the integral above can be simplified as
$\label{dd3}\int_{\mathbb T} \norm{\dot f(t)}^2dt =   \sum_{n=-\infty}^{\infty} \omega(n)^2 \abs{F(n)}^2 \tau$.
Expanding one of the $\omega(n)$ terms and canceling the $\tau$ term gives us \eqref{dd1}.

\begin{rem}
From Theorem \ref{thm:p} it follows that when the L-2 norm of the derivative of a function is reduced, the product $\abs{F(n)}^2  \abs{\omega(n) 2 \pi n}$ is reduced for all $n\in\mathbb Z$. Given that $\omega(n)$ is the natural frequency for each Fourier approximation and $\abs{F(n)}$ their respective amplitudes, reducing the L-2 norm of the derivative of a function implicitly reduces the the amplitude of the high frequency oscillations.
\end{rem}

\end{proof}

\subsubsection{L-2 norm of $\dot k,\dot \theta$}
With the bounds on $e$ and $x_m$ in the previous sections, we now derive bounds on the adaptive parameter derivatives.
From \eqref{eqd:update} we can deduce that $\norm{\dot k}^2  = \gamma^2 e^2 r^2$.
Integrating both sides and taking the supremum of $r$ we have
\be\label{reffer1}
\int_0^t\norm{\dot k(\tau)}^2 d\tau   \leq \gamma^2   \norml{r(t)}{\infty}^2 \norml{e(t)}{2}^2. 
\ee
Using the bound on  $\norml{e}2$ from \eqref{el2} we have that
\be\label{kdotl2}
{\norml{\dot k(t)}{2}^2 \leq  \frac{2 \gamma^2  \norml{r(t)}{\infty}^2 V(0)}{\abs{a_m+\ell}}}.
\ee
Similarly, from \eqref{eqd:update} we can derive the inequality
%
\be\begin{split}\label{above2}
\int_0^t\norm{\dot\theta(\tau)}^2 d\tau \leq & 2  \gamma^2 \norml{e(t)}\infty ^2 \int_0^t e(\tau)^2  d\tau \\ & +2 \gamma^2 \norml{x_m(t)}\infty^2 \int_0^te(\tau)^2  d\tau.
\end{split}\ee
Using the bounds for $\norml{e(t)}\infty$ in \eqref{elinf}, $\norml{e(t)}2$ in  \eqref{el2}, and the following bound on 
\be\norml{x_m(t)}\infty^2 \leq 2\norml{x^o_m(t)}\infty^2  + \frac{\abs{\ell}^2V(0)}{\abs{a_m}\abs{a_m+\ell}},\ee which follows from the bound on $\Delta x_m(t)$ in \eqref{xm3}, the bound in \eqref{above2} can be simplified as
\be \label{thetadotl2}
\begin{split}\norml{\dot\theta(t)} 2^2 \leq &  4\gamma^2 \frac{V(0) \norml{x_m^o(t)}\infty^2}{\abs{a_m+\ell}} + 4 \gamma^2 \frac{V(0)^2}{\abs{a_m+\ell}} \\ &+ 2\gamma^2 \frac{\abs \ell ^2}{\abs{a_m}} \frac{V(0)^2}{\abs{a_m+\ell}^2}.
\end{split}
\ee

From \eqref{kdotl2} it is clear that by increasing $\abs{\ell}$ one can arbitrarily decrease the L-2 norm of $\dot k$. The same is not true, however, for the L-2 norm of $\dot\theta$ given in \eqref{thetadotl2}. Focusing on the first two terms we see that their magnitude is proportional to $\gamma^2/\abs{a_m+\ell}$. Letting $\ell$ approach negative infinity, the first and second second terms in $\eqref{thetadotl2}$ approaches zero and the third term converges to a bound which is proportional to $\gamma^2V(0)^2$. When $\ell=0$, the second term becomes proportional to $\gamma^2V(0)^2$ and the last term in \eqref{thetadotl2} becomes zero. From the previous discussion it is clear that regardless of our choice of $\ell$, the only way to uniformly decrease the the L-2 norms of the derivatives of the adaptive terms is by decreasing $\gamma$. This leads to the classic trade-off present in adaptive control. One can reduce the high frequency oscillations in the adaptive parameters by choosing a small $\gamma$, this however leads to poor reference model tracking.  This can be scene by expanding the bound on $\norml{e(t)}\infty$ in \eqref{elinf},
\be\label{elinf2}
\norml{e(t)}\infty^2 \leq e(0)^2+ \frac{1}{\gamma}\tilde{\bar\theta}(0)^T\tilde{\bar\theta}(0).
\ee
If one chooses a small $\gamma$, then poor state tracking performance can occur, as the second term in \eqref{elinf2} is large for small $\gamma$. Therefore it still remains to be seen as to how and when CRM leads to an advantage over ORM. As shown in the following subsection and subsequent section, this can be demonstrated through the introduction of projection in the adaptive law and a suitable choice of $\ell$ and $\gamma$. This in turn will allow the reduction of high frequency oscillations.

\subsection{Effect of Projection Algorithm}
It is well known that some sort of modification of the adaptive law is needed to ensure boundedness in the presence of perturbations such as disturbances or unmodeled dynamics. We use a projection algorithm \cite{pom92} with CRMs as
\be\label{ad:proj}
\dot{\bar\theta}= \text{Proj}_\Omega(-\gamma \text{sgn}(k_p) e \phi,\bar\theta)
\ee
where $\bar\theta(0),\bar\theta^* \in \Omega$, with $\Omega\in\Re^2$ a closed and convex set  centered at the origin whose size is dependent on a known bound of the parameter uncertainty $\bar\theta^*$. Equation \eqref{ad:proj} assures that ${\bar\theta(t)\in\Omega\ \forall\ t\geq0}$ \cite{pom92}. The following definition will be used throughout:
\be
 \Theta_\text{max}\triangleq \sup_{\bar\theta,\bar\theta^*\in\Omega} \norm{\tilde{\bar\theta}} . 
\ee

Beginning with the already proven fact that $\dot V \leq (a_m+\ell) e^2$, we note that the following bound holds as well with the use of \eqref{ad:proj}:
\be\label{gb1}
\dot V(t) \leq - 2{\abs{a_m+\ell}} V + \frac{\abs{a_m+\ell}}{\gamma} \abs{k_p} \Theta_\text{max}^2.
\ee
Using Gronwall-Bellman \cite{bell43} it can be deduced that
\be
V(t) \leq \left(V(0)-\frac{\abs{k_p}} {2\gamma}\Theta_\text{max}^2\right) \exp{-2\abs{a_m+\ell}t} +  \frac{\abs{k_p}}{2\gamma} \Theta_\text{max}^2
\ee
which can be further simplified as
\be\label{exp2}
V(t) \leq \frac{1}{2}\norm{e(0)}^2 \exp{-2\abs{a_m+\ell}t} +  \frac{\abs{k_p}}{2\gamma} \Theta_\text{max}^2
\ee
which informs the following exponential bound on $e(t)$:
\be\label{ert}
\norm{e(t)}^2 \leq \norm{e(0)}^2 \exp{-2\abs{a_m+\ell}t} +  \frac{\abs{k_p}}{\gamma} \Theta_\text{max}^2.
\ee

The discussions in Section \ref{sec:crm} show that with a projection algorithm, the CRM adaptive system is not only stable but satisfies the transient bounds in \eqref{el2}, \eqref{tr2},  \eqref{kdotl2}, \eqref{thetadotl2}, \eqref{exp2} and \eqref{ert}. The bounds in \eqref{tr2},   \eqref{kdotl2} and \eqref{thetadotl2} leave much to be desired however, as it is not clear how the free design parameters $\ell$ and $\gamma$ can be chosen so that the bounds on $\norml{\dot k}2$ and $\norml{\dot \theta}2$ can be systematically reduced while simultaneously controlling the peaking in the reference model output $x_m$. Using the bounds in \eqref{exp2} and \eqref{ert}, in the following section,  we propose an ``optimal'' CRM design that does not suffer from the peaking phenomena, and show how the bounds in \eqref{tr2}, \eqref{kdotl2} and \eqref{thetadotl2} can be further improved. We also make a direct connection between the L-2 norm of the derivative of a signal, and the frequency and amplitude of oscillation in that signal.

\section{Bounded Peaking with CRM adaptive systems}
\subsection{Bounds on $x_m$}
We first show that the peaking that $\norm{x_m(t)}$ was shown to exhibit in Section II-A can be reduced through the use of a projection algorithm in the update law as in \eqref{ad:proj}, and a suitable choice of $\gamma$ and $\ell$. For this purpose we derive two different bounds, one over the time interval $[0,t_1]$ and another over $[t_1,\infty)$.

\begin{lem}\label{lem1}
Consider the adaptive system with the plant in \eqref{eqd:plant}, with the controller defined by \eqref{eqd:controller}, the update law in \eqref{ad:proj} with the reference model as in \eqref{eqd:reference}. For all $\delta>1$ and $\epsilon>0$, there exists a time $t_1\geq0$ such that
\be\begin{split}\label{xm11}
\norm{x_p(t)} &\leq \delta \norm{x_p(0)} + \epsilon \norml{r(t)}\infty   \\
\norm{x_m(t)}& \leq \delta \norm{x_p(0)} + \epsilon \norml{r(t)}\infty+ \sqrt{2V(0)} \\ 
\end{split}\ee
$\forall\   0\leq t\leq t_1$.
\end{lem}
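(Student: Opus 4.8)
The plan is to bound $x_p$ directly from its own closed-loop dynamics, deliberately avoiding any route through the error equation \rep{eqd:error} or the reference model \rep{eqd:reference}, since either would reintroduce a dependence on the design gains $\ell$ and $\gamma$ that this statement is meant to exclude. The crucial observation is that the projection law \rep{ad:proj} keeps $\bar\theta(t)\in\Omega$ for all $t\geq0$, where $\Omega$ is a fixed compact set independent of $\ell$ and $\gamma$. Substituting $u=\theta x_p + k r$ into \rep{eqd:plant} yields the scalar linear time-varying equation $\dot x_p = (a_p + k_p\theta(t))\,x_p + k_p k(t)\, r(t)$, whose coefficients are therefore uniformly bounded by constants that do not involve the design gains.

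First I would set $M \triangleq \sup_{\bar\theta\in\Omega}\abs{a_p + k_p\theta}$ and $K\triangleq\sup_{\bar\theta\in\Omega}\abs{k}$, both finite by compactness of $\Omega$. Integrating the scalar ODE with the variation-of-constants (integrating factor) formula and applying the triangle inequality together with $\abs{r(\tau)}\leq\norml{r(t)}{\infty}$ gives a Gr\"onwall-type estimate,
\ben
\norm{x_p(t)} \leq \exp{Mt}\norm{x_p(0)} + \abs{k_p}\,K\,\norml{r(t)}{\infty}\,\frac{\exp{Mt}-1}{M}.
\een
The key feature is that as $t\to0^+$ the factor $\exp{Mt}\to1$ while the coefficient $(\exp{Mt}-1)/M\to0$, and both are nondecreasing in $t$. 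Hence, given any $\delta>1$ and $\epsilon>0$, I would choose $t_1>0$ small enough that simultaneously $\exp{Mt_1}\leq\delta$ and $\abs{k_p}K(\exp{Mt_1}-1)/M\leq\epsilon$; the monotonicity then delivers the first inequality of \rep{xm11} for every $t\in[0,t_1]$.

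For the bound on $x_m$ I would write $x_m=x_p-e$ and use $\norm{x_m(t)}\leq\norm{x_p(t)}+\norm{e(t)}$, so it only remains to control $\norm{e(t)}$ uniformly. Because the projection operator never increases the Lyapunov derivative, the inequality $\dot V\leq(a_m+\ell)e^2\leq0$ from the proof of Theorem \ref{thm1} survives under \rep{ad:proj}; consequently $V(t)\leq V(0)$, and the $L_\infty$ estimate \rep{elinf}, which gives $\norm{e(t)}\leq\sqrt{2V(0)}$ pointwise, persists. Combining this with the bound on $\norm{x_p(t)}$ just derived yields the second inequality of \rep{xm11} on $[0,t_1]$.

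I expect the genuinely delicate point to be conceptual rather than computational: recognizing that $x_p$ should be bounded from its own dynamics through the projection-enforced membership $\bar\theta\in\Omega$, so that the resulting constants $M$ and $K$ are independent of $\ell$ and $\gamma$. The remaining technical checks — the simultaneous smallness of $\exp{Mt_1}-\delta$ and $\exp{Mt_1}-1$ for small $t_1$, which is exactly what forces the quantifier structure ``for all $\delta>1,\epsilon>0$ there exists $t_1$,'' and the verification that projection preserves the monotonicity of $V$ needed to retain the pointwise estimate on $e$ — are routine.
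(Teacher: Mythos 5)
Your proof is correct and follows essentially the same route as the paper: bound $x_p$ from its own time-varying closed-loop dynamics using the projection-enforced compactness of the parameter set (the paper writes the coefficient as $a_m+k_p\tilde\theta$ and bounds it by $a_m+\abs{k_p}\Theta_\text{max}$, which is your $M$ in shifted coordinates), pick $t_1$ so that both the homogeneous factor and the convolution coefficient are small, then get $x_m$ from $\norm{x_m}\leq\norm{x_p}+\norm{e}$ and the Lyapunov bound $\norm{e}\leq\sqrt{2V(0)}$.
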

\begin{proof}
The plant in \eqref{eqd:plant} is described by the dynamical equation
\ben
\dot x_p = (a_m +k_p\tilde\theta) x_p + k_p \tilde k r
\een
where we note that $(a_m +k_p\tilde\theta) $ can be positive. This leads to the inequality
\ben\begin{split}
\norm{x_p(t)} \leq &\norm{x_p(0)} \exp{(a_m +\abs{k_p}\Theta_\text{max})t}  \\ &+ \int_0^t\exp{(a_m +\abs{k_p}\Theta_\text{max})(t-\tau) }\abs{k_p}\Theta_\text{max}\norml{r(\tau)}\infty d\tau.
\end{split}\een
For any ${\delta>}1$ and any ${\epsilon>0}$, it follows from the above inequality that a $t_1$ exists such that 
$\exp{(a_m +\abs{k_p}\Theta_\text{max})t}\leq \delta$ and ${\int_0^t\exp{(a_m +\abs{k_p}\Theta_\text{max})(t-\tau) }\abs{k_p}\Theta_\text{max} d\tau}\leq \epsilon$, ${\forall\ t\leq t_1}$. The bound on $x_m(t)$ follows from the fact that $\norm{x_m}\leq \norm{x_p}+\norm e$ and from \eqref{elinf}.
\end{proof}

\begin{rem}
The above lemma illustrates the fact that if $t_1$ is small, the plant and reference model states cannot move arbitrarily far from their respective initial conditions over $[0,\ t_1]$.
\end{rem}

\begin{lem}\label{lem:exp}
For any $a\geq0$ $\exists$ an $x^*<0$ such that for all ${x\leq x^*<0}$
\ben
\exp{xa}\leq \abs x^{-\frac 1 2} \quad \forall\ x\leq x^*<0.
\een
\end{lem}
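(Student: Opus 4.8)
The plan is to strip the statement down, via a logarithm, to a standard growth comparison between a polynomial and an exponential. Writing $u = \abs{x} = -x > 0$ and using $x < 0$, $a \geq 0$, the desired inequality $\exp{xa} \leq \abs{x}^{-1/2}$ becomes $\exp{-ua} \leq u^{-1/2}$, and since both sides are positive I would take logarithms to obtain the equivalent statement $\tfrac12 \ln u \leq a u$. Thus the lemma asserts precisely that $\tfrac{\ln u}{2u} \leq a$ for all $u$ sufficiently large, and the whole task reduces to exhibiting the threshold $x^{*}$.

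For the substantive case $a > 0$ I would argue directly, and I would prefer an explicit $x^{*}$ over an appeal to a limit. The clean ingredient is the elementary bound $\ln y \leq y - 1$: applied with $y = u^{1/2}$ it gives $\ln u = 2\ln(u^{1/2}) \leq 2(u^{1/2}-1) \leq 2u^{1/2}$, valid for every $u > 0$, hence $\tfrac12 \ln u \leq u^{1/2}$. It therefore suffices to secure $u^{1/2} \leq a u$, i.e. $u \geq a^{-2}$. So I would set $x^{*} = -a^{-2}$: for every $x \leq x^{*} < 0$ one has $\abs{x} = u \geq a^{-2}$, whence $\tfrac12 \ln \abs{x} \leq \abs{x}^{1/2} \leq a\abs{x}$, which rearranges back to $\exp{xa} \leq \abs{x}^{-1/2}$. (Equivalently, one may simply invoke $\tfrac{\ln u}{2u}\to 0$ as $u\to\infty$ and pick any $U$ past which it drops below $a$, taking $x^{*}=-U$; the point of the bound above is only to make the constant concrete and self-contained.)

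The one delicate point, which I would flag explicitly, is the boundary value $a = 0$: there $\exp{xa} = 1$ while $\abs{x}^{-1/2} \to 0$ as $x \to -\infty$, so the stated inequality cannot hold for all $x \leq x^{*}$ no matter how $x^{*}$ is chosen, and correspondingly $\tfrac{\ln u}{2u} > 0$ for $u > 1$ can never be $\leq 0$. The genuine content of the lemma is thus the case $a > 0$, which is also the only case invoked downstream, where $a$ plays the role of a fixed positive time and $x$ that of $-\abs{a_m+\ell}\to-\infty$; I would accordingly read the hypothesis as $a>0$ and present the argument above. Beyond this edge case there is no real obstacle: once the logarithmic reduction is in place, everything follows from the single inequality $\ln y \leq y-1$, and the explicit choice $x^{*} = -a^{-2}$ finishes the proof without any limiting argument.
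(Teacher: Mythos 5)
Your proof is correct, and it establishes in a self-contained, quantitative way the fact the paper merely asserts: the published proof is the single sentence ``Exponential functions with negative exponent decay faster than any fractional polynomial.'' Your logarithmic reduction to $\tfrac12\ln u\le au$ with $u=\abs{x}$, combined with $\ln y\le y-1$ at $y=u^{1/2}$, produces the explicit threshold $x^{*}=-a^{-2}$; that is strictly more informative than the paper's appeal to an asymptotic growth comparison, since it would in principle let one trace how the constant $\ell^{*}$ of Lemma~\ref{et1} depends on $t_1$, a dependence the paper leaves implicit. Your remark on $a=0$ is also a genuine catch rather than pedantry: for $a=0$ the left-hand side is identically $1$ while $\abs{x}^{-1/2}\to 0$ as $x\to-\infty$, so the statement as written (with hypothesis $a\ge 0$) is false, and the paper's one-line proof silently assumes the exponent is actually negative, i.e.\ $a>0$. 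The hypothesis should therefore read $a>0$, and correspondingly the downstream applications need $t_1>0$ (which Lemma~\ref{lem1} permits but does not insist on). In short: same underlying comparison of exponential versus fractional-power decay, but your version supplies an explicit constant, an elementary argument in place of a bare assertion, and a correction to the boundary case.
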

\begin{proof}
Exponential functions with negative exponent decay faster than any fractional polynomial.
\end{proof}

We now derive bounds on $x_m(t)$ when $t\geq t_1$. For this purpose a tighter bound on the error $e$ than that in \eqref{el2} is first derived.

\begin{lem}\label{et1}
Consider the adaptive system with the plant in \eqref{eqd:plant}, with the controller defined by \eqref{eqd:controller}, the update law in \eqref{ad:proj} with the reference model as in \eqref{eqd:reference}. Given a time $t_1\geq0$, there exists an $\ell^*$ s.t.
\be\label{et22}
\sqrt{\int_{t_1}^\infty \norm{e}^2 d\tau} \leq  \frac{\norm{e(0)} }{\sqrt{2}\abs{a_m+\ell}} + \sqrt{\frac{\abs{k_p}}{2\gamma \abs{a_m+\ell}}} \Theta_\text{max}
\ee
for all $\ell\leq \ell^*$.
\end{lem}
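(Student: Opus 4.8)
The plan is to avoid integrating the pointwise bound \eqref{ert} directly over $[t_1,\infty)$, since its constant floor $\frac{\abs{k_p}}{\gamma}\Theta_\text{max}^2$ is not integrable on an unbounded interval. Instead I would return to the Lyapunov inequality. With projection we already have the fact $\dot V \leq (a_m+\ell)e^2 = -\abs{a_m+\ell}e^2$, so that $\abs{a_m+\ell}e^2 \leq -\dot V$. Integrating this from $t_1$ to $\infty$ and using $V\geq 0$ gives the telescoping bound $\int_{t_1}^\infty \norm{e}^2 d\tau \leq V(t_1)/\abs{a_m+\ell}$, which replaces the crude $V(0)$-based estimate \eqref{el2} by one anchored at the later time $t_1$.

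Next I would insert the exponentially decaying bound on $V(t_1)$ supplied by \eqref{exp2}, namely $V(t_1) \leq \tfrac12 \norm{e(0)}^2 \exp{-2\abs{a_m+\ell}t_1} + \frac{\abs{k_p}}{2\gamma}\Theta_\text{max}^2$. Taking square roots and applying subadditivity $\sqrt{a+b}\leq\sqrt a + \sqrt b$ splits the bound into two pieces. The second piece is exactly $\sqrt{\frac{\abs{k_p}}{2\gamma\abs{a_m+\ell}}}\,\Theta_\text{max}$, matching the target term in \eqref{et22} verbatim, so no further work is needed there.

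The crux is the first piece, $\frac{\norm{e(0)}}{\sqrt2}\,\frac{\exp{-\abs{a_m+\ell}t_1}}{\sqrt{\abs{a_m+\ell}}}$, which must be shown to be dominated by $\frac{\norm{e(0)}}{\sqrt2\,\abs{a_m+\ell}}$. This reduces to the inequality $\exp{-\abs{a_m+\ell}t_1}\leq \abs{a_m+\ell}^{-1/2}$, and this is precisely where Lemma \ref{lem:exp} is invoked: setting $x=a_m+\ell<0$ and $a=t_1\geq0$, the lemma guarantees an $x^*<0$ below which $\exp{xa}\leq\abs x^{-1/2}$ holds, which translates into a threshold $\ell^*=x^*-a_m$ such that the desired inequality holds for all $\ell\leq\ell^*$.

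I expect the main obstacle to be conceptual rather than computational: recognizing that one must integrate $\dot V$ itself rather than the pointwise bound \eqref{ert}, and then trading the exponential decay in $t_1$ against the algebraic factor $\abs{a_m+\ell}^{-1/2}$ via Lemma \ref{lem:exp} to collapse the transient term into the clean $1/\abs{a_m+\ell}$ form. The order of quantifiers in the statement is consistent with this argument, since $t_1$ is fixed first (inherited from Lemma \ref{lem1}) and $\ell$ is only chosen afterward, so the threshold $\ell^*$ is allowed to depend on $t_1$.
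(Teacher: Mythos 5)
Your proposal is correct and follows essentially the same route as the paper's proof: integrate $-\dot V$ over $[t_1,\infty)$ to bound $\int_{t_1}^\infty \norm{e}^2\,d\tau$ by $V(t_1)/\abs{a_m+\ell}$, substitute $t=t_1$ into \eqref{exp2}, take square roots, and invoke Lemma \ref{lem:exp} to absorb $\exp{-\abs{a_m+\ell}t_1}$ into $\abs{a_m+\ell}^{-1/2}$. The only difference is cosmetic: you correctly keep $\dot V \leq -\abs{a_m+\ell}e^2$ as an inequality under projection, where the paper writes it as an equality.
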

\begin{proof}
Substitution of ${t=t_1}$ in \eqref{exp2} 
and using the fact that ${\dot V(t)= -{\abs{a_m+\ell}}\norm{e(t)}^2}$, the following bound is obtained:
\be\begin{split}
\int_{t_1}^\infty \norm{e}^2 d\tau \leq & \frac{\norm{e(0)}^2 \exp{-2\abs{a_m+\ell}t_1}}{2\abs{a_m+\ell}} \\&+  \frac{\abs{k_p}}{2\gamma \abs{a_m+\ell}} \Theta_\text{max}^2.
\end{split}\ee
Noting that $\sqrt{\exp{-2 \abs{a_m+\ell}t_1}} =\exp{-\abs{a_m+\ell}t_1}$, and using the result from Lemma \ref{lem:exp}, we know that there exists an $\ell^*$ such that for all $\ell<\ell^*$, $\exp{(a_m+\ell)t_1}\leq \abs{a_m+\ell}^{-1/2}$. This leads to \eqref{et22}.
\end{proof}

Similar to the definition of $\Delta x_m(t)$  in Section II.B we define \ben
\Delta \bar x_m(t)\triangleq \abs{\ell} \int_{t_1}^t \exp{-{\abs{a_m}}(t-\tau)} \norm{e(\tau)} d\tau\een for all $t\geq t_1$.
Choosing $\ell\leq \ell^*$ with $\ell^*$ defined in Lemma \ref{et1}, using the bound on $e(t)$ in \eqref{et22} and the Cauchy Schwartz inequality, we have that
\ben
\norml{\Delta\bar x_m(t)}\infty \leq b_4 + b_5 \left({\frac{\abs \ell}{\gamma}}\right)^{1/2} \quad \forall t\geq t_1 
\een
where $b_4=\frac{\norm{e(0)}}{2 \sqrt{\abs{a_m}}}$ and $b_5 = \frac{\sqrt{\abs{k_p}}\Theta_\text{max}}{2\sqrt{\abs{a_m}}}$. Choosing $\gamma = \abs \ell$ the bound above becomes
\be\label{t51}
\norml{\Delta\bar x_m(t)}\infty \leq b_4 + b_5 \quad t\geq t_1.
\ee
Comparing the bound in \eqref{t51} to the bound in \eqref{tr1}, we note that the peaking exponent (Definition 1) has been reduced from 1/2 to 0 for the upper bound on the convolution integral of interest. Thus, as $\abs{\ell}$ is increased, the term $\Delta \bar x_m(t)$ will not exhibit peaking. That is, the response of the CRM is fairly close to that of the ORM. This result allows us to obtain a bound on the closed-loop reference model $x_m(t)$ that does not increase with increasing $\abs{\ell}$. This is explored in the following theorem and subsequent remark in detail.

\begin{thm}\label{thm:tt}Consider the adaptive system with the plant in \eqref{eqd:plant}, the controller defined by \eqref{eqd:controller}, the update law in \eqref{ad:proj} with the reference model as in \eqref{eqd:reference}, with $t_1$ chosen as in Lemma \ref{lem1} and $\ell\leq\ell^*$ where $\ell^*$ satisfies \eqref{et22}. It can then be shown that
\be\label{xmtight}
\norm{x_m(t)}_{t \geq t_1}^2 \leq c_1(t) + \frac{\norm{e(0)}^2}{\abs{a_m}}+\frac{\abs{\ell} \abs{k_p}\Theta_\text{max}^2}{\gamma\abs{a_m}}
\ee
where
\ben
c_1  \triangleq  2\left(\norml{x_m^o}\infty + \norm{ x_m(t_1)}\exp{-\abs{a_m}(t-t_1)} \right)^2. 
\een
\end{thm}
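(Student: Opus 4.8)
The plan is to restart the reference-model dynamics at $t_1$ and isolate the influence of the tracking error. Applying variation of constants to \eqref{eqd:reference} on $[t_1,t]$ gives, for $t\geq t_1$,
\ben
x_m(t) = \exp{a_m(t-t_1)} x_m(t_1) + \int_{t_1}^t \exp{a_m(t-\tau)} k_m r\, d\tau - \ell \int_{t_1}^t \exp{a_m(t-\tau)} e\, d\tau.
\een
Taking norms and recalling $a_m=-\abs{a_m}$, the last integral is dominated by $\Delta\bar x_m(t)$, so that $\norm{x_m(t)}\leq A(t)+\Delta\bar x_m(t)$, where $A(t)$ collects the decaying homogeneous term $\exp{-\abs{a_m}(t-t_1)}\norm{x_m(t_1)}$ and the forced integral $\norm{\int_{t_1}^t \exp{a_m(t-\tau)} k_m r\, d\tau}$.

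The first step is to bound $A(t)$. The forced integral is the zero-initial-condition response of the stable open-loop model \eqref{orm}, and is therefore dominated by the steady-state magnitude of that model, which I bound by $\norml{x_m^o}\infty$ (equivalently by $\abs{k_m}\norml{r}\infty/\abs{a_m}$ if a self-contained constant is preferred). Hence $A(t)\leq \norml{x_m^o}\infty + \norm{x_m(t_1)}\exp{-\abs{a_m}(t-t_1)}=\sqrt{c_1/2}$. Squaring $\norm{x_m(t)}\leq\sqrt{c_1/2}+\Delta\bar x_m(t)$ with the elementary inequality $(a+b)^2\leq 2a^2+2b^2$ yields $\norm{x_m(t)}^2\leq c_1(t)+2\,\Delta\bar x_m(t)^2$, so the remaining task is to show that $2\,\Delta\bar x_m(t)^2$ is at most $\frac{\norm{e(0)}^2}{\abs{a_m}}+\frac{\abs{\ell}\abs{k_p}\Theta_\text{max}^2}{\gamma\abs{a_m}}$.

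For the remaining term I would apply Cauchy--Schwartz to $\Delta\bar x_m(t)$ exactly as in the derivation of \eqref{t51}, using $\int_{t_1}^t\exp{-2\abs{a_m}(t-\tau)}d\tau\leq 1/(2\abs{a_m})$, to obtain $\Delta\bar x_m(t)^2\leq\frac{\abs{\ell}^2}{2\abs{a_m}}\int_{t_1}^\infty\norm{e}^2 d\tau$. Inserting the intermediate estimate established inside the proof of Lemma \ref{et1},
\ben
\int_{t_1}^\infty\norm{e}^2 d\tau \leq \frac{\norm{e(0)}^2\exp{-2\abs{a_m+\ell}t_1}}{2\abs{a_m+\ell}}+\frac{\abs{k_p}\Theta_\text{max}^2}{2\gamma\abs{a_m+\ell}},
\een
splits the work into two pieces. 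The $\Theta_\text{max}$ piece is handled by $\abs{a_m+\ell}=\abs{a_m}+\abs{\ell}\geq\abs{\ell}$, which gives $\abs{\ell}^2/\abs{a_m+\ell}\leq\abs{\ell}$ and collapses it to $\frac{\abs{\ell}\abs{k_p}\Theta_\text{max}^2}{2\gamma\abs{a_m}}$. For the $\norm{e(0)}$ piece I invoke the hypothesis $\ell\leq\ell^*$: Lemma \ref{lem:exp} applied with $x=a_m+\ell$ and $a=t_1$ gives $\exp{-2\abs{a_m+\ell}t_1}\leq\abs{a_m+\ell}^{-1}$, after which $\abs{\ell}^2/\abs{a_m+\ell}^2\leq 1$ reduces that piece to $\frac{\norm{e(0)}^2}{2\abs{a_m}}$. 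Doubling and adding then yields the claimed bound \eqref{xmtight}.

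The main obstacle is the bookkeeping of the factors $\abs{\ell}^2/\abs{a_m+\ell}$: a crude bound leaves an unwanted growth in $\abs{\ell}$, and the key device is to absorb one power of $\abs{\ell}$ into $\abs{a_m+\ell}$ (legitimate since $a_m,\ell<0$) for the $\Theta_\text{max}$ term, while for the initial-error term the exponential $\exp{-2\abs{a_m+\ell}t_1}$ must be traded against a full power of $\abs{a_m+\ell}^{-1}$ via Lemma \ref{lem:exp}, which is precisely why the threshold $\ell^*$ (and hence the hypothesis $\ell\leq\ell^*$) is needed. A secondary point to verify is the identification of the forced integral with $\norml{x_m^o}\infty$ in the $A(t)$ bound, which is where the definition of $c_1$ originates.
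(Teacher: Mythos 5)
Your proof is correct and follows essentially the same route as the paper: restart \eqref{eqd:reference} at $t_1$, absorb the homogeneous and forced terms into $c_1$, apply Cauchy--Schwartz to the error convolution, invoke the truncated L-2 bound on $e$ guaranteed by $\ell\leq\ell^*$, and use $\abs{a_m+\ell}\geq\abs{\ell}$ to collapse the powers of $\abs{\ell}$. The only (harmless) deviation is that you work with the un-square-rooted integral estimate from inside the proof of Lemma \ref{et1} rather than the statement \eqref{et22} itself, which saves a factor of two in the constants and still implies \eqref{xmtight}.
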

\begin{proof}
The solution of \eqref{eqd:reference} for $t \geq t_1$ is given by
\ben\begin{split}
\norm{x_m(t)} \leq & \norml{x_m^o}\infty + \norm{ x_m(t_1)}\exp{-\abs{a_m}(t-t_1)} \\&+ \abs{\ell} \int_{t_1}^\infty \exp{-{\abs{a_m}}(t-\tau)} \norm{e(\tau)} d\tau.
\end{split}\een
Using the Cauchy Schwartz Inequality and \eqref{et22} from Lemma \ref{et1},
we have that
\ben\begin{split}
\norm{ x_m(t)}_{t\geq t_1} \leq & \norml{x_m^o}\infty+ \norm{\ x_m(t_1)}\exp{-\abs{a_m}(t-t_1)} \\&+ \frac{\abs{\ell}}{\sqrt{2\abs{a_m}}} \cdot \\ & \left(\frac{\norm{e(0)}}{\sqrt 2 \abs{a_m+\ell}}+\frac{\sqrt{\abs{k_p}}\Theta_\text{max}}{\sqrt{2\gamma\abs{a_m+\ell}}}\right),
\end{split}\een
for all $\ell<\ell^*$. Squaring leads to \eqref{xmtight}.
\end{proof}

\begin{cor}
Following the same assumptions as Theorem \ref{thm:tt}, with $\gamma=\abs{\ell}$
\begin{align}
\label{xmtighter}
\norm{x_m(t)}^2_{t\leq t_1}& \leq 2(\delta \norm{x_p(0)} + \epsilon \norml{r(t)}\infty)^2+ 4V(0) \\ 
\label{xmtighter2}
\norm{x_m(t)}_{t \geq t_1}^2& \leq c_1(t) + \frac{\norm{e(0)}^2}{\abs{a_m}}+\frac{ \abs{k_p}\Theta_\text{max}^2}{\abs{a_m}}
\end{align}
\end{cor}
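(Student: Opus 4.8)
The plan is to obtain both inequalities by specializing and sharpening the bounds already established in Lemma \ref{lem1} and Theorem \ref{thm:tt}, with no new dynamical analysis required. Each of the two displayed bounds corresponds to one of the two time intervals, $[0, t_1]$ and $[t_1, \infty)$, so I would treat them separately and then combine.

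For the first inequality \eqref{xmtighter}, I would start from the bound on $\norm{x_m(t)}$ over $0 \leq t \leq t_1$ proven in Lemma \ref{lem1}, namely $\norm{x_m(t)} \leq \delta \norm{x_p(0)} + \epsilon \norml{r(t)}\infty + \sqrt{2V(0)}$. Writing $A = \delta \norm{x_p(0)} + \epsilon \norml{r(t)}\infty$ and $B = \sqrt{2V(0)}$, this is simply $\norm{x_m(t)} \leq A + B$. Squaring and applying the elementary inequality $(A+B)^2 \leq 2A^2 + 2B^2$ yields $\norm{x_m(t)}^2 \leq 2A^2 + 2B^2 = 2(\delta \norm{x_p(0)} + \epsilon \norml{r(t)}\infty)^2 + 4V(0)$, which is exactly \eqref{xmtighter}. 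Note that the choice $\gamma = \abs{\ell}$ plays no role here, since $V(0)$ is independent of the gains.

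For the second inequality \eqref{xmtighter2}, I would simply invoke Theorem \ref{thm:tt}, whose bound reads $\norm{x_m(t)}_{t \geq t_1}^2 \leq c_1(t) + \frac{\norm{e(0)}^2}{\abs{a_m}} + \frac{\abs{\ell}\abs{k_p}\Theta_\text{max}^2}{\gamma\abs{a_m}}$, and substitute the design choice $\gamma = \abs{\ell}$. The ratio $\abs{\ell}/\gamma$ in the last term then collapses to unity, so that term becomes $\frac{\abs{k_p}\Theta_\text{max}^2}{\abs{a_m}}$, giving precisely \eqref{xmtighter2}.

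There is no substantive obstacle in this argument; it is entirely a matter of applying $(A+B)^2 \leq 2A^2 + 2B^2$ and cancelling $\abs{\ell}/\gamma$. The only point requiring a moment's care is that the choice $\gamma = \abs{\ell}$ must remain consistent with the hypotheses inherited from Theorem \ref{thm:tt}, in particular $\ell \leq \ell^*$ with $\ell^*$ the threshold from Lemma \ref{et1}; since $\ell^*$ is fixed by $t_1$ and the plant data and does not depend on $\gamma$, fixing $\gamma = \abs{\ell}$ afterward leaves all prior bounds valid. The real significance of the result, rather than its difficulty, is that after setting $\gamma = \abs{\ell}$ the right-hand side of \eqref{xmtighter2} no longer grows with $\abs{\ell}$, confirming that the peaking exponent of $x_m$ has been driven to zero.
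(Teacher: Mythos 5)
Your proof is correct and is exactly the argument the paper intends: the corollary is stated without proof precisely because it follows from Lemma \ref{lem1} via $(A+B)^2\le 2A^2+2B^2$ and from Theorem \ref{thm:tt} by substituting $\gamma=\abs{\ell}$. One small aside: your claim that $V(0)$ is independent of the gains is not quite right, since $V(0)=\frac{1}{2}e(0)^2+\frac{\abs{k_p}}{2\gamma}\tilde{\bar\theta}(0)^T\tilde{\bar\theta}(0)$ does depend on $\gamma$; this is immaterial here because \eqref{xmtighter} holds for any $\gamma>0$ with $V(0)$ left in that form.
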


\begin{rem}
Through the use of a projection algorithm in the adaptive law, the exploitation of finite time stability of the plant in Lemma 1, and through the use of the extra degree of freedom in the choice of $\ell$, we have obtained a bound for $\norm{x_m(t)}^2_{t\leq t_1}$ in \eqref{xmtighter} which is only a function of the initial condition of the plant and controller. Similarly for $\norm{x_m(t)}^2_{t\geq t_1}$, we have derived a bound in \eqref{xmtighter} which is once again a function of the initial condition of the plant and controller alone. The most important point to note is that unlike \eqref{tr2}, the bound on $x_m$ in \eqref{xmtighter} and \eqref{xmtighter2} is no longer proportional to $\ell$ in any power. This implies that even for large $\abs{\ell}$, an appropriate choice of the adaptive tuning parameter $\gamma$ can help reduce the peaking in the reference model. This improvement was possible only through the introduction of projection and the use of the Gronwall-Bellman inequality.
\end{rem}

%

\subsection{Bounds on parameter derivatives and oscillations}
We now present the main result of this paper. 

\begin{thm}\label{thm3}
The adaptive system with the plant in \eqref{eqd:plant}, the controller defined by \eqref{eqd:controller}, the update law in \eqref{ad:proj} with the reference model as in \eqref{eqd:reference}, with $t_1$ chosen as in Lemma \ref{lem1} and $\ell\leq\ell^*$ where $\ell^*$ is given in Lemma \ref{et1} and $\gamma=\abs{\ell}$, the following bounds are satisfied for all $\gamma \geq 1$:
\be\label{t1infb}
\begin{split}
\int_{t_1}^\infty \norm{\dot k}^2 d\tau\leq &  \left(\norm{e(0)}^2+\abs{k_p}\Theta_\text{max}^2\right) \norml{r(t)}\infty \\
\int_{t_1}^\infty \norm{\dot \theta}^2 d\tau \leq &  \left(\norm{e(0)}^2+\abs{k_p}\Theta_\text{max}^2\right) c_2 \\
   & +  \left(\norm{e(0)}^2+\abs{k_p}\Theta_\text{max}^2\right) \left(\frac{c_3}{\sqrt{\abs \ell}} + \frac{c_4}{\gamma}\right) \end{split} \ee 
where $c_2,c_3,c_4$ are independent of $\gamma$ and $\ell$, and are only a function of the initial conditions of the system and the fixed design parameters.
\end{thm}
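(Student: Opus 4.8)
The plan is to read $\dot k$ and $\dot\theta$ directly off the projected update law \eqref{ad:proj}, exploit the fact that the projection operator does not increase the Euclidean norm of its vector argument \cite{pom92}, and then reduce everything to the single tail estimate on $e$ supplied by Lemma \ref{et1}. Non-expansiveness gives the componentwise bounds $\abs{\dot k}\leq \gamma\abs{e}\abs{r}$ and $\abs{\dot\theta}\leq \gamma\abs{e}\abs{x_p}$, so $\int_{t_1}^\infty\norm{\dot k}^2 d\tau\leq \gamma^2\norml{r}{\infty}^2\int_{t_1}^\infty e^2 d\tau$, and, writing $x_p=e+x_m$ (from $e=x_p-x_m$) and using $x_p^2\leq 2e^2+2x_m^2$, also $\int_{t_1}^\infty\norm{\dot\theta}^2 d\tau\leq 2\norml{e}{\infty}^2\,\gamma^2\!\int_{t_1}^\infty e^2 d\tau+2\norml{x_m}{\infty}^2\,\gamma^2\!\int_{t_1}^\infty e^2 d\tau$, where every supremum is taken over $[t_1,\infty)$.

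The crux of the whole argument is a single cancellation: with the design choice $\gamma=\abs{\ell}$ one has $\gamma^2\int_{t_1}^\infty e^2 d\tau\leq \norm{e(0)}^2+\abs{k_p}\Theta_\text{max}^2=:M$, uniformly in $\ell$. I would obtain this by taking \eqref{et22} in its square-root form, multiplying both sides by $\gamma=\abs{\ell}$, and only then squaring: the prefactor $\abs{\ell}$ meets the factors $1/\abs{a_m+\ell}$ and $1/\sqrt{\gamma\abs{a_m+\ell}}$ so that each term carries the ratio $\abs{\ell}/\abs{a_m+\ell}=\abs{\ell}/(\abs{a_m}+\abs{\ell})\leq 1$, after which $(a+b)^2\leq 2a^2+2b^2$ collapses the result to $M$. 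It is essential not to square \eqref{et22} prematurely: doing so leaves an uncontrolled $\abs{\ell}^2$ that the convolution decay cannot absorb, and it is precisely here that the substitution $\gamma=\abs{\ell}$ (rather than $\gamma$ held fixed) performs the decoupling of adaptation speed from the bound.

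With this cancellation in hand, the $\dot k$ estimate is immediate, $\int_{t_1}^\infty\norm{\dot k}^2 d\tau\leq \norml{r}{\infty}^2\,\gamma^2\!\int_{t_1}^\infty e^2 d\tau\leq M\norml{r}{\infty}^2$, which is the claimed bound. For $\dot\theta$ I would insert the two supremum bounds into $\int_{t_1}^\infty\norm{\dot\theta}^2 d\tau\leq 2M\big(\norml{e}{\infty}^2+\norml{x_m}{\infty}^2\big)$. For $\norml{e}{\infty}$ over $[t_1,\infty)$ I would use \eqref{ert}, whose exponential term is maximized at $t=t_1$ and, by Lemma \ref{lem:exp}, obeys $\norm{e(0)}^2\exp{-2\abs{a_m+\ell}t_1}\leq \norm{e(0)}^2\abs{a_m+\ell}^{-1/2}\leq \norm{e(0)}^2\abs{\ell}^{-1/2}$, while its second term contributes $\abs{k_p}\Theta_\text{max}^2/\gamma$. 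For $\norml{x_m}{\infty}$ over $[t_1,\infty)$ I would use the corollary bound \eqref{xmtighter2}, whose right-hand side is constant in $\ell$ and $\gamma$ once $\gamma=\abs{\ell}$ is imposed; here one checks that $c_1$, and in particular $\norm{x_m(t_1)}$ through Lemma \ref{lem1} together with $V(0)\leq \frac12\norm{e(0)}^2+\abs{k_p}\Theta_\text{max}^2/(2\abs{\ell})$, stays uniformly bounded as $\ell\to-\infty$. Grouping by dependence on the design parameters then produces exactly the three-term structure: the constant part becomes $c_2$, the decaying exponential in $\norml{e}{\infty}^2$ yields the $c_3/\sqrt{\abs{\ell}}$ term, and the residual $\abs{k_p}\Theta_\text{max}^2/\gamma$ yields $c_4/\gamma$, with $c_2,c_3,c_4$ independent of $\ell$ and $\gamma$.

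The main obstacle I anticipate is not any single inequality but the bookkeeping that keeps every coefficient uniform in $\ell$ after the $\gamma=\abs{\ell}$ substitution: the $\gamma^2$ born of squaring the update law is genuinely dangerous, and the argument succeeds only because the $L_2$ tail of $e$ decays like $1/\abs{a_m+\ell}$ and the $L_\infty$ peak of $e$ at $t=t_1$ is killed by the super-polynomial factor $\exp{-2\abs{a_m+\ell}t_1}$ controlled through Lemma \ref{lem:exp}. A secondary care point is to confirm that $t_1$, fixed once and for all via Lemma \ref{lem1}, carries no hidden dependence on $\ell$, so that invoking exponential decay in $\abs{\ell}$ at the fixed time $t_1$ is legitimate; the hypothesis $\gamma\geq 1$ then merely keeps the correction terms $c_3/\sqrt{\abs{\ell}}$ and $c_4/\gamma$ bounded and ordered.
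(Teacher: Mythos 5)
Your proposal is correct and follows essentially the same route as the paper's own proof: the decomposition via \eqref{reffer1} and \eqref{above2}, the key cancellation $\gamma^2\int_{t_1}^\infty e^2\,d\tau \leq \norm{e(0)}^2+\abs{k_p}\Theta_\text{max}^2$ obtained by multiplying \eqref{et22} by $\gamma=\abs{\ell}$ before squaring (the paper's \eqref{tt22}), the $L_\infty$ tail bound on $e$ from \eqref{ert} with Lemma \ref{lem:exp}, and the $\ell$-independent bound on $\norm{x_m}_{t\geq t_1}$ supplying $c_2$. You actually make explicit several steps the paper leaves implicit (non-expansiveness of the projection, the $\abs{\ell}/\abs{a_m+\ell}\leq 1$ cancellation, and the $\ell$-independence of $t_1$); your $\norml{r}{\infty}^2$ in the $\dot k$ bound is what the derivation genuinely yields, the first power in \eqref{t1infb} being an apparent typographical slip in the statement.
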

\begin{proof}
Using \eqref{reffer1} and  \eqref{et22}, together with the fact that $\gamma=\abs \ell$, we obtain the first inequality in \eqref{t1infb}. To prove the bound on $\dot \theta$ we start with \eqref{above2}, and note that 
\be
\label{tt22}\gamma^2 \int_{t_1}^\infty e(\tau)^2 d\tau \leq \left(\norm{e(0)}^2+\abs{k_p}\Theta_\text{max}^2\right). 
\ee
Using the bound in \eqref{tt22} and setting $c_2= \norm{x_m(t)}^2_{t\geq t_1}$ from \eqref{xmtight} we have the first term in the bound on $\dot\theta$ in \eqref{t1infb}.

We note from \eqref{ert} and Lemma \ref{lem:exp} that
\ben
\norm{e(t)} \leq \frac{e(0)^2}{\sqrt{\abs{a_m+\ell}}}+\frac{\abs{k_p}\Theta_\text{max}}{\gamma} \qquad \forall t\geq t_1.
\een
This together with \eqref{tt22} leads to the second term in the bound on $\dot\theta$ in \eqref{t1infb}.
Therefore, $c_2,c_3$ and $c_4$ are independent of $\gamma$ and $\ell$.
\end{proof}

\begin{rem}
From the above Theorem it is clear that if $\gamma$ and $\abs{\ell}$ are increased while holding $\gamma=\abs{\ell}$, the L-2 norms of the derivatives of the adaptive parameters can be  decreased significantly. Two important points should be noted. One is that the bounds in \eqref{t1infb} are much tighter than those in \eqref{thetadotl2}, with terms of the form $\gamma^2/\ell$ no longer present. Finally, from Theorem \ref{thm:p}, it follows that the improved L-2 bounds in \eqref{t1infb} result in a reduced high frequency oscillations in the adaptive parameters.
\end{rem}

\begin{rem}
Noting the structure of the control input in \eqref{eqd:controller}, it follows directly that reduced oscillations in $\theta(t)$ and $k(t)$ results in reduced oscillation in the control input for the following reason. We note that\ben x_p(t)=\exp{a_mt}x_p(0)+\int_0^t \exp{a_m(t-\tau)}k_p\left( \tilde k r+\tilde \theta x_p\right) d \tau.\een Since $\tilde\theta$ and $\tilde k$ have reduced oscillations, $x_p(t)$ will be smooth, resulting in $\theta(t)x_p(t)$ and therefore $u(t)$ to have reduced oscillations. It should also be noted that with $\ell<\ell^*$ and $\gamma=\abs \ell$, it follows that $\norml{x(t)}\infty$ is independent of $\ell$. 
\end{rem}

\subsection{Simulation Studies for CRM}
Simulation studies are now presented to illustrate the improved transient behavior of the adaptive parameters and the peaking that can occur in the reference model. For these examples the reference system is chosen such that $a_m=-1,k_m=1$ and the plant is chosen as $a_p=1,k_p=2$. The adaptive parameters are initialized to be zero. Figures 1 through 3 are for an ORM adaptive system with the tuning gain chosen as $\gamma\in\{1,10,100\}$. Walking through Figures 1 through 3 it clear that as the tuning gain is increased the plant tracks the reference model more closely, at the cost of increased oscillations in the adaptive parameters. Then the CRM is introduced and the resulting responses are shown in Figures 4 through 6, for $\gamma=100$, and $\ell$=-10, -100, and -1000 respectively. First, it should be noted that no high frequency oscillations are present in these cases, and the trajectories are smooth, which corroborates  the inequalities \eqref{t1infb} in Theorem \ref{thm3}. As the ratio $\abs{\ell}/\gamma$ increases, as illustrated in Figures 4 through 6, the reference trajectory $x_m$ starts to deviate from the open-loop reference $x_m^o$, with the peaking phenomenon clearly visible in Figure 6 where $\abs \ell/\gamma =10$. This corroborates our results in section III as well.

\begin{figure}[h!]
\centering
\includegraphics[width=3.3in]{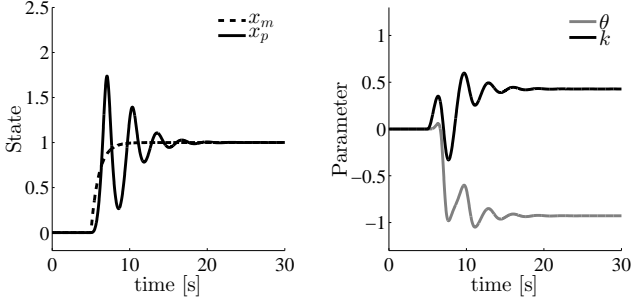}
\caption{Trajectories of the ORM adaptive system $\gamma=1$.}\label{fig:num1}
\end{figure}
\begin{figure}[h!]
\centering
\includegraphics[width=3.3in]{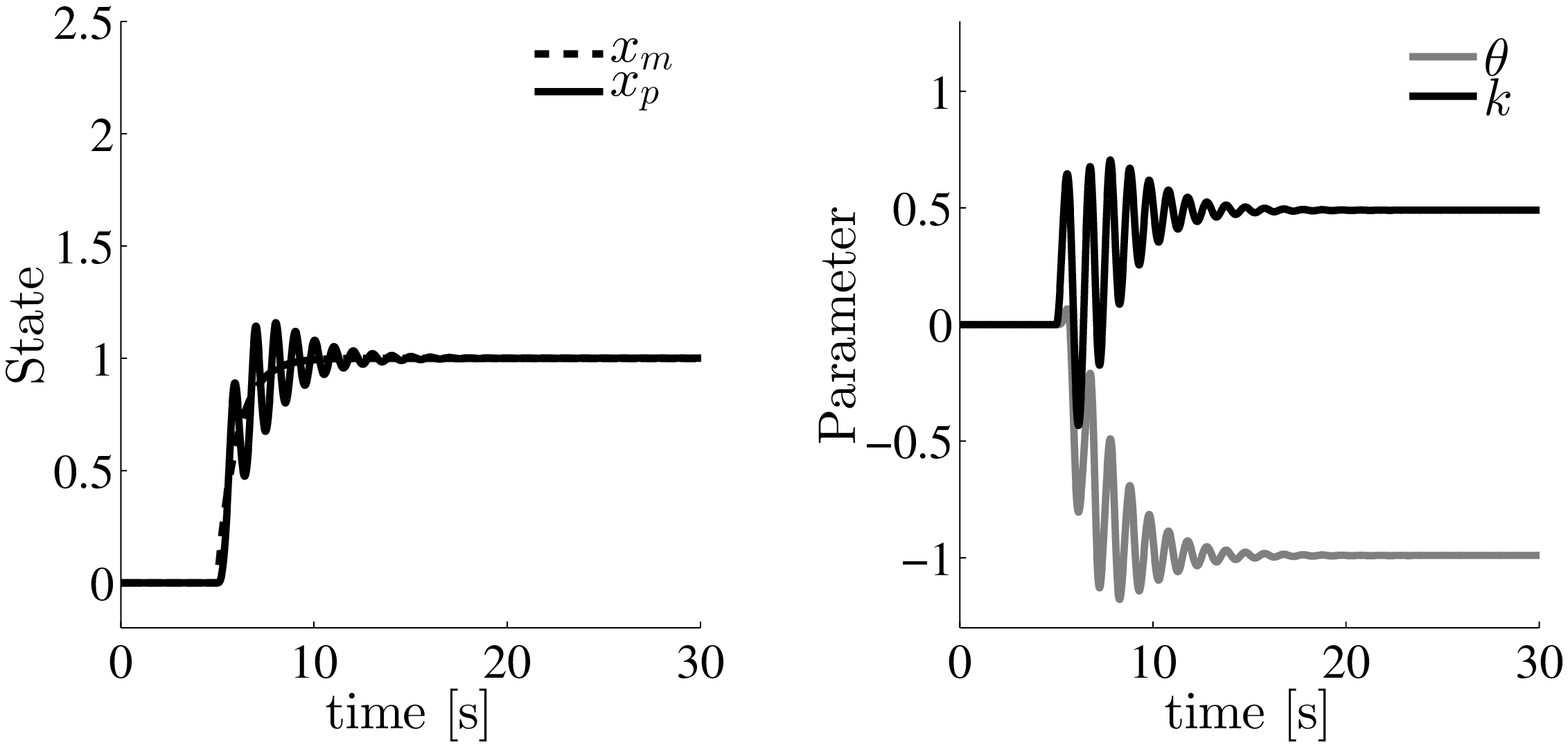}
\caption{Trajectories of the ORM adaptive system  $\gamma=10$.}\label{fig:num1}
\end{figure}
\begin{figure}[h!]
\centering
\includegraphics[width=3.3in]{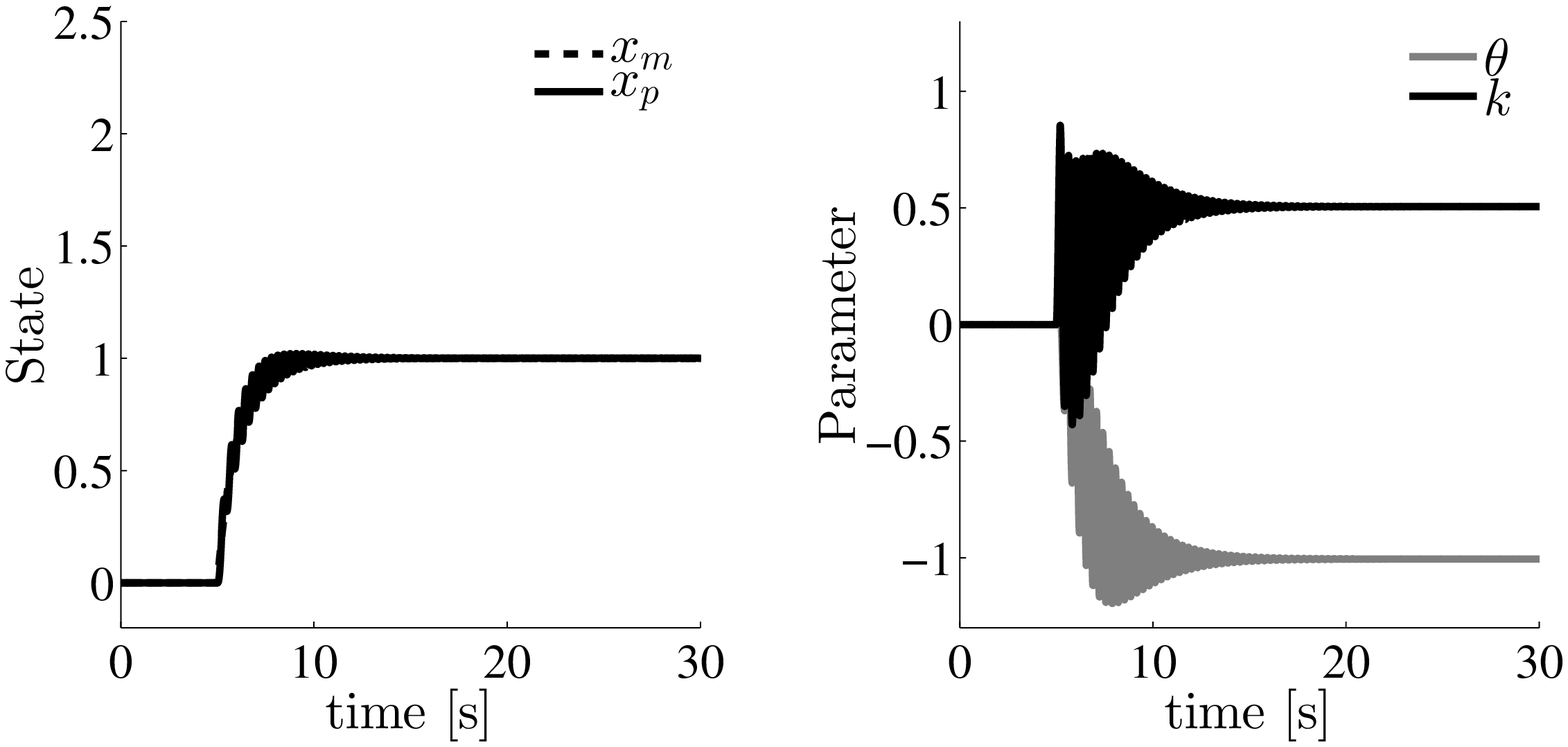}
\caption{Trajectories of the ORM adaptive system  $\gamma=100$.}\label{fig:num1}
\end{figure}
\begin{figure}[h!]
\centering
\includegraphics[width=3.3in]{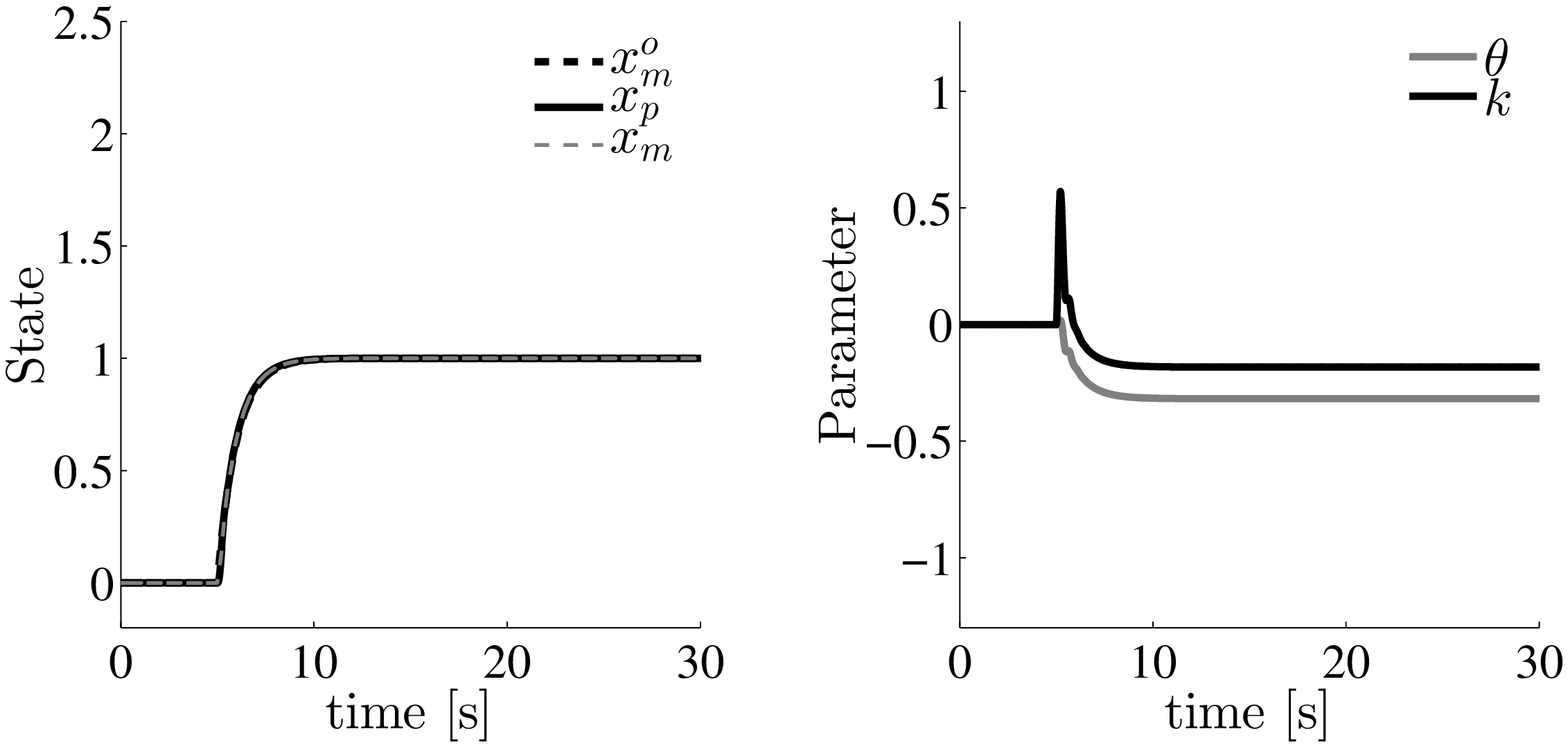}
\caption{Trajectories of the CRM adaptive system  $\gamma=100$, $\ell=-10$.}\label{fig:num1}
\end{figure}
\begin{figure}[h!]
\centering
\includegraphics[width=3.3in]{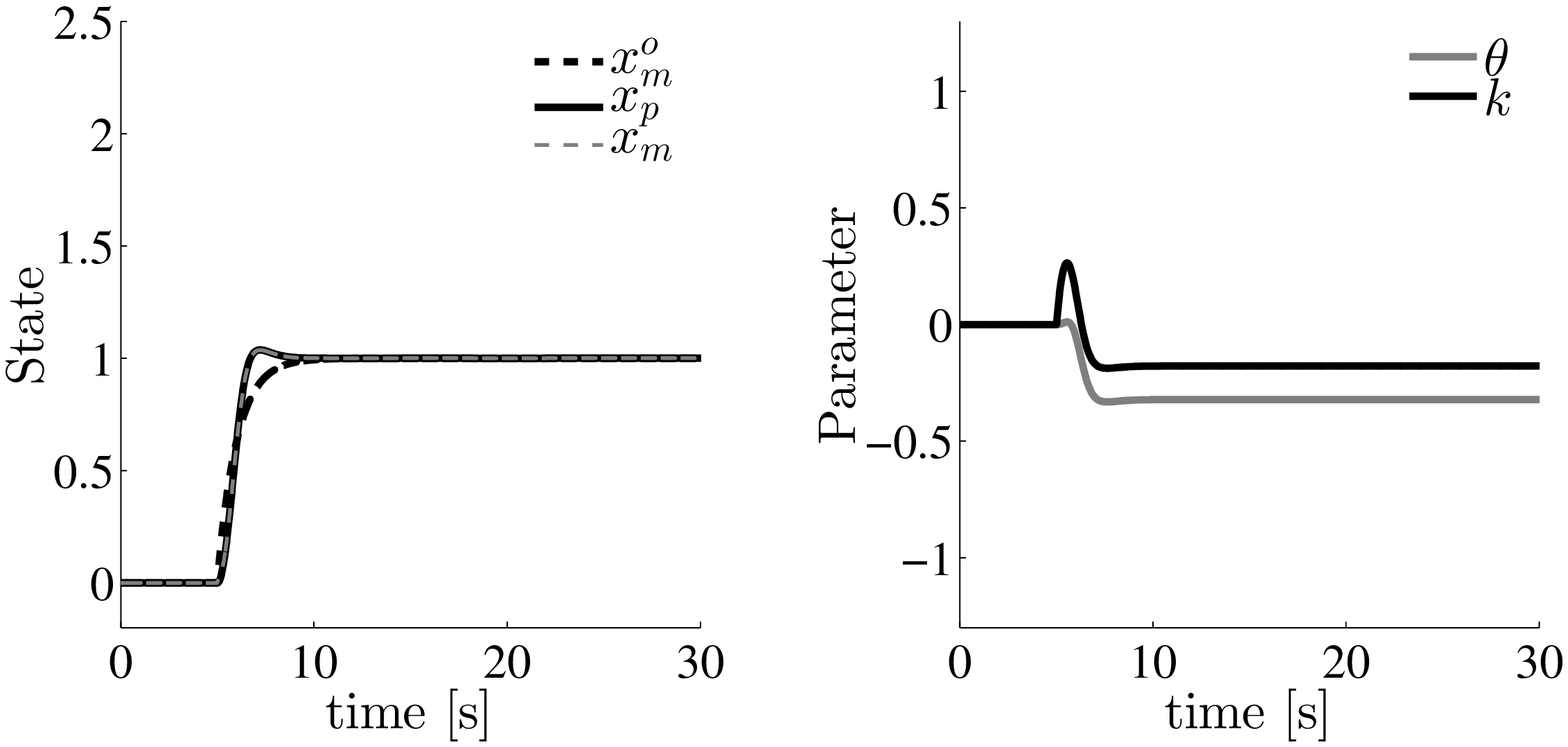}
\caption{Trajectories of the CRM adaptive system  $\gamma=100$, $\ell=-100$.}\label{fig:num1}
\end{figure}
\begin{figure}[h!]
\centering
\includegraphics[width=3.3in]{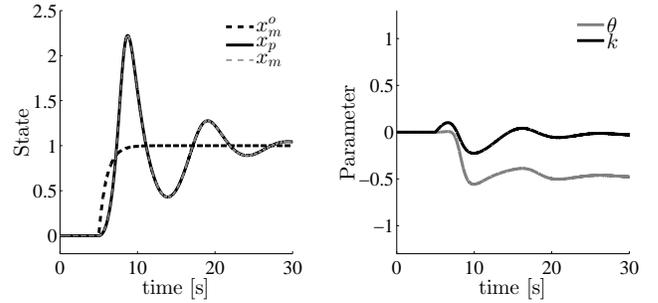}
\caption{Trajectories of the CRM adaptive system  $\gamma=100$, $\ell=-1000$.}\label{fig:num1}
\end{figure}

\section{CRM for States Accessible Control}\label{sec:ndim}
In this section we show that the same bounds shown previously easily extend to the states accessible case. Consider the $n$ dimensional linear system
\be\label{statesplant}
\dot x_p = Ax_p + B \Lambda u
\ee
with $B$ known, $A,\Lambda$ are unknown, and $\Lambda$. An a priori upper bound on $\Lambda$ is known and therefore we define
\ben\bar \lambda \triangleq \max_i \lambda_i(\Lambda),\een where $\lambda_i$ denotes the $i$-th Eigenvalue. The reference model is defined as 
\be\label{statesref}
\dot x_m = A_m x_m + B r - Le.
\ee
The control input is defined as 
\be\label{stu} u=\Theta x_p+ Kr.\ee It is assumed that there exists $\Theta^*$ and $K^*$ such that 
\ben\begin{split}
A+B\Lambda\Theta^*& = A_m \\
\Lambda K^* & =  I \end{split}
\een
and the parameter errors are then defined as $\tilde\Theta=\Theta-\Theta^*$ and  $\tilde K = K-K^*$. Defining the error as ${e=x_p-x_m}$, the update law for the adaptive parameters is then
\be\label{stateupdate}
\begin{split}
\dot\Theta& =\text{Proj}_{\Omega_1}(-\Gamma B^T P e x_p^T,\Theta) \\
\dot K&=\text{Proj}_{\Omega_2}(-\Gamma B^TPer^T,K)
\end{split}
\ee
where $P=P^T>0$ is the solution to the Lyapnov equation $(A_m+L)^TP+P(A_m+L)=-Q$ which exists for all $Q=Q^T>0$. With a slight abuse of notation the following definition is reused from the previous section,
\be
\sup_{\Theta,\Theta^*\in\Omega_1} \norm{\tilde{\Theta}}_F \triangleq \Theta_\text{max} \quad \text{and} \quad  \sup_{K,K^*\in\Omega_2} \norm{\tilde{K}}_F \triangleq K_\text{max},
\ee
where $\norm{\cdot}_F$ denotes the Frobenius norm. The adaptive system can be shown to be stable by using the following Lyapunov candidate,
\ben
V(t) = e^TPe + \text{Tr}(\Lambda\tilde\Theta^T\Gamma^{-1}\tilde\Theta) +  \text{Tr}(\Lambda\tilde K^T\Gamma^{-1}\tilde K)
\een
where after differentiating we have that $\dot V \leq -e^TQe$. We choose $L$  and $\Gamma$ in a special form to ease the analysis in the following sections.

\begin{assm}\label{ass1}
The free design parameters are chosen as 
\be\begin{split}
\Gamma & =  \gamma I_{n\times n}\\
L & =  - A_m + g I_{n\times n}
\end{split}\ee
where $\gamma>0$ and $g<0$.
\end{assm}
Assumption 1 allows us to choose a $P={1/2} I_{n\times n}$ in the Lyapunov equation and therefore $Q=-g{I_{n\times n}}$. Using these simplification the Lyapunov candidate derivative can be bounded as 
\be\label{vstate}
\dot V(t)\leq - \abs{g} \norm{e(t)}^2,\ee
and by direct integration we have
\be
\norm{e(t)}^2 \leq \frac{V(0)}{\abs{g}}.
\ee
Using the Gronwall-Bellman Lemma as was previously used in \eqref{gb1}-\eqref{ert}, we can deduce that
\be\label{gbstate}
V(t) \leq \frac{1}{2}\norm{e(0)}^2 \exp{-2\abs{g}t} +  \frac{\bar \lambda}{\gamma} \left(\Theta_\text{max}^2+K_\text{max}^2\right).
\ee
\begin{lem} \label{lem4} For all $\epsilon>0$ and $\delta>1$ there exists a $t_2$ such that the plant and reference model in \eqref{statesplant} and \eqref{statesref} respectively satisfy the bounds in \eqref{xm11} with $t_2$ replacing $t_1$.
\end{lem}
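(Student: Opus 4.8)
The plan is to reproduce the scalar argument of Lemma \ref{lem1} in the multidimensional setting, the only genuinely new feature being that the scalar unstable factor $(a_m+k_p\tilde\theta)$ is replaced by a time-varying, generally non-normal matrix. First I would rewrite the closed-loop plant: substituting $u=\Theta x_p+Kr$ from \eqref{stu} into \eqref{statesplant} and using $A+B\Lambda\Theta^*=A_m$ and $\Lambda K^*=I$ gives
\ben
\dot x_p=(A_m+B\Lambda\tilde\Theta)x_p+B(I+\Lambda\tilde K)r,
\een
the exact matrix analogue of $\dot x_p=(a_m+k_p\tilde\theta)x_p+k_p\tilde k\,r$ used in Lemma \ref{lem1}; as there, the matrix $A_m+B\Lambda\tilde\Theta$ need not be stable over the transient.

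The step I expect to be the main obstacle is controlling the growth of $x_p$ without a clean matrix exponential: $A_m+B\Lambda\tilde\Theta$ is time varying through $\tilde\Theta(t)$ and, being non-normal, fails $\norm{\exp{Mt}}\le\exp{\norm{M}t}$. I would avoid exponentiating the matrix altogether and instead bound the derivative of $\norm{x_p}^2$. Differentiating, applying Cauchy--Schwarz, and using the projection bounds $\norm{\tilde\Theta}_F\le\Theta_\text{max}$, $\norm{\tilde K}_F\le K_\text{max}$ (which dominate the corresponding spectral norms) together with $\norm{\Lambda}\le\bar\lambda$ yields
\ben
\frac{d}{dt}\norm{x_p}^2\le 2\mu\norm{x_p}^2+2\nu\norm{x_p}\norml{r(t)}\infty,
\een
where $\mu\triangleq\norm{A_m}+\norm{B}\bar\lambda\Theta_\text{max}$ and $\nu\triangleq\norm{B}(1+\bar\lambda K_\text{max})$ are fixed and independent of $\gamma$ and $g$. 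Dividing through by $2\norm{x_p}$ collapses this to the scalar linear differential inequality $\frac{d}{dt}\norm{x_p}\le\mu\norm{x_p}+\nu\norml{r(t)}\infty$.

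The comparison (Gronwall--Bellman) lemma applied to this scalar inequality then gives
\ben
\norm{x_p(t)}\le\norm{x_p(0)}\exp{\mu t}+\nu\norml{r(t)}\infty\int_0^t\exp{\mu(t-\tau)}d\tau,
\een
which has precisely the structure exploited in Lemma \ref{lem1}. Since $\exp{\mu t}\to1$ and the convolution integral $\to0$ as $t\to0^{+}$, for any $\delta>1$ and $\epsilon>0$ continuity supplies a $t_2>0$ with $\exp{\mu t}\le\delta$ and $\nu\int_0^t\exp{\mu(t-\tau)}d\tau\le\epsilon$ for all $t\le t_2$; taking the smaller of the two times establishes the first line of \eqref{xm11} with $t_2$ in place of $t_1$. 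The bound on $x_m$ then follows exactly as in Lemma \ref{lem1}: from the triangle inequality $\norm{x_m}\le\norm{x_p}+\norm{e}$, and from $P=\tfrac12 I$ together with $\dot V\le-\abs{g}\norm{e}^2\le0$ in \eqref{vstate}, which yields $\tfrac12\norm{e(t)}^2\le e^TPe\le V(t)\le V(0)$, hence $\norm{e(t)}\le\sqrt{2V(0)}$.
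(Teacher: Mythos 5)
Your proof is correct, and it fills in an argument the paper actually leaves out: Lemma \ref{lem4} is stated without proof, the intent clearly being that the scalar argument of Lemma \ref{lem1} carries over verbatim. Your closed-loop rewriting $\dot x_p=(A_m+B\Lambda\tilde\Theta)x_p+B(I+\Lambda\tilde K)r$ is the right analogue (and is in fact more careful than the paper's scalar version, which silently drops the $k_m r$ term), and replacing the explicit exponential solution by the differential inequality for $\norm{x_p}^2$ is the correct way to handle the fact that the system matrix is time varying through $\tilde\Theta(t)$, so that no single matrix exponential represents the transition matrix. Two small quibbles: the inequality $\norm{\exp{Mt}}\le\exp{\norm{M}t}$ does hold for any constant square matrix (it is the submultiplicativity of the norm applied to the power series); what genuinely fails is the eigenvalue-based bound and, more to the point, the existence of an exponential solution for a time-varying $M(t)$ — your method sidesteps both, so nothing breaks. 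Also, bounding $\norm{\Lambda}$ by $\bar\lambda=\max_i\lambda_i(\Lambda)$ is valid only for (say) diagonal or symmetric $\Lambda$, but the paper itself uses $\bar\lambda$ in exactly this way in its Lyapunov bounds, so you are consistent with its conventions. The concluding step $\norm{e(t)}\le\sqrt{2V(0)}$ from $P=\tfrac12 I$ and $\dot V\le 0$ is exactly what is needed for the second line of \eqref{xm11}.
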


\begin{lem}\label{lem5}
Consider the adaptive system with the plant in \eqref{statesplant}, the controller in \eqref{stu}, the update law in \eqref{stateupdate}, the reference model as in \eqref{statesref} and $\Gamma$ and $L$ parameterized as in Assumption \ref{ass1}. Given a time $t_2\geq0$, there exists a $g^*$ s.t.
\be\label{extra2}
\sqrt{\int_{t_2}^\infty \norm{e}^2 d\tau} \leq  \frac{\norm{e(0)} }{\sqrt{2}\abs{g}} + \sqrt{\frac{\bar\lambda(\Theta_\text{max}^2+K_\text{max}^2)}{\gamma \abs{g}}}
\ee
for all $g\leq g^*$.
\end{lem}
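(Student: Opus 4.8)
The plan is to mirror the scalar argument of Lemma \ref{et1} essentially verbatim, with the multidimensional quantities $|g|$, $\bar\lambda$, and $\Theta_\text{max}^2+K_\text{max}^2$ playing the roles that $\abs{a_m+\ell}$, $\abs{k_p}$, and $\Theta_\text{max}^2$ played there. The three ingredients I would assemble are the dissipation inequality \eqref{vstate}, the Gronwall-Bellman bound \eqref{gbstate}, and the exponential-decay comparison of Lemma \ref{lem:exp}. The existence of the time $t_2$ is already supplied by Lemma \ref{lem4}, so no separate finite-time argument is required.

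First I would integrate \eqref{vstate} over $[t_2,\infty)$. Since $\dot V(t)\leq -|g|\norm{e(t)}^2$ and $V\geq 0$, integration yields $|g|\int_{t_2}^\infty \norm{e}^2 d\tau \leq V(t_2)-V(\infty)\leq V(t_2)$, so that $\int_{t_2}^\infty \norm{e}^2 d\tau \leq V(t_2)/|g|$. Next I would evaluate the Gronwall-Bellman bound \eqref{gbstate} at $t=t_2$ to replace $V(t_2)$, obtaining
\ben
\int_{t_2}^\infty \norm e^2 d\tau \leq \frac{\norm{e(0)}^2 \exp{-2|g|t_2}}{2|g|} + \frac{\bar\lambda(\Theta_\text{max}^2+K_\text{max}^2)}{\gamma|g|}.
\een

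Taking square roots and using subadditivity $\sqrt{a+b}\leq\sqrt a+\sqrt b$ would split this into the two terms appearing in \eqref{extra2}, except that the first still carries the factor $\exp{-|g|t_2}/\sqrt{2|g|}$. The final step, and the only place any care is needed, is the application of Lemma \ref{lem:exp}. Because $g<0$ we have $\exp{-|g|t_2}=\exp{g t_2}$, and with $a=t_2\geq 0$ Lemma \ref{lem:exp} supplies a $g^*<0$ such that $\exp{g t_2}\leq|g|^{-1/2}$ for all $g\leq g^*$; substituting collapses $\exp{-|g|t_2}/\sqrt{2|g|}$ into $1/(\sqrt 2\,|g|)$ and produces exactly the first term of \eqref{extra2}.

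I do not anticipate a genuine obstacle here, as the argument is a routine transcription of Lemma \ref{et1}. The only pitfalls are bookkeeping ones: keeping the sign conventions straight when converting $\dot V\leq -|g|\norm e^2$ into the integral bound, correctly identifying $\exp{-|g|t_2}$ with $\exp{g t_2}$ before invoking Lemma \ref{lem:exp}, and recalling that \eqref{vstate} is an inequality (owing to the projection operator in the update law \eqref{stateupdate}) rather than the equality used in the scalar development, a change that only strengthens the resulting bound.
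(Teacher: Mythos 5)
Your proposal is correct and follows essentially the same route as the paper's proof: evaluate the Gronwall--Bellman bound \eqref{gbstate} at $t_2$, integrate $-\dot V$ over $[t_2,\infty)$ and divide by $\abs{g}$, take square roots, and invoke Lemma \ref{lem:exp} to absorb $\exp{-\abs{g}t_2}$ into $\abs{g}^{-1/2}$. Your added remarks on subadditivity of the square root and on \eqref{vstate} being an inequality under projection are accurate bookkeeping points that the paper leaves implicit.
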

\begin{proof} From \eqref{gbstate} we have that
\ben
V(t_2) \leq \frac{1}{2}\norm{e(0)}^2 \exp{-2\abs{g}t_2} +  \frac{\bar \lambda}{\gamma} \left(\Theta_\text{max}^2+K_\text{max}^2\right).\een
Using the above bound and integrating $-\dot V$ in \eqref{gbstate} from $t_2$ to $\infty$ and dividing by $\abs{g}$ leads to
\be
\int_{t_2}^\infty \norm{e}^2 d\tau \leq  \frac{\norm{e(0)}^2 \exp{-2\abs{g}t_2}}{2\abs{g}} +  \frac{\bar \lambda}{\gamma \abs{g}} \left(\Theta_\text{max}^2+K_\text{max}^2\right)
\ee
Taking the square root, noting that $\sqrt{\exp{-2 \abs{g}t_2}} =\exp{-\abs{g}t_2}$, and using the result from Lemma \ref{lem:exp}, we know that there exists an $g^*$ such that for all $g<g^*$, $\exp{-\abs{g}t_2}\leq \abs{g}^{-1/2}$.
\end{proof}

\begin{thm}Consider the adaptive system with the plant in \eqref{statesplant}, the controller in \eqref{stu}, the update law in \eqref{stateupdate}, the reference model as in \eqref{statesref}, $\Gamma$ and $L$ parameterized as in Assumption \ref{ass1}, with $t_2$ chosen as in Lemma \ref{lem4} and $g\leq g ^*$ where $g^*$ is given in Lemma \ref{lem5}. It can be shown that
\be\begin{split}\label{trav1}
\norm{x_m(t)}_{t \geq t_2}^2 \leq &\  c_5(t) + 2 \left( \frac{\norm{A_m}^2} {\abs{g}^2}+1\right) {\frac{a_1}{a_2}} \norm{e(0)}^2 \\ & +4 \bar \lambda \left( \frac{\norm{A_m}^2} { \gamma \abs{g}^2}+\frac{\abs g}{\gamma}\right) {\frac{a_1}{a_2}}\left(\Theta_\text{max}^2+K_\text{max}^2\right)
\end{split}\ee
where
\ben\begin{split}
c_5 & \triangleq  2\left(\norml{x_m^o}\infty + \norm{ x_m(t_2)}a_1\exp{-a_2(t-t_2)} \right)^2 \\
\exp{A_mt} & \leq a_1\exp{a_2t}
\end{split}
\een
with $a_1,a_2>0$.
\end{thm}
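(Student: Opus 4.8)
The plan is to reproduce the scalar argument of Theorem \ref{thm:tt} almost verbatim, replacing the scalar exponential $\exp{a_m\cdot}$ by the matrix exponential $\exp{A_m\cdot}$ and the scalar feedback gain $\ell$ by the matrix $L=-A_m+gI$. I would begin with the variation-of-constants representation of the reference model \eqref{statesref} on $[t_2,\infty)$,
\begin{align*}
x_m(t) = {}& \exp{A_m(t-t_2)}x_m(t_2) + \int_{t_2}^t \exp{A_m(t-\tau)} B r(\tau)\,d\tau \\
 & {}- \int_{t_2}^t \exp{A_m(t-\tau)} L e(\tau)\,d\tau,
\end{align*}
and split $x_m$ into three pieces: the free response from the state at $t_2$, the $r$-forced response, and the feedback-of-error convolution. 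This is the exact structural analog of the integral representation used for the scalar case.

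Next I would bound each piece. Since $A_m$ is Hurwitz, the matrix-exponential bound $\norm{\exp{A_m s}}\le a_1\exp{-a_2 s}$ holds, so the free response is dominated by $a_1\exp{-a_2(t-t_2)}\norm{x_m(t_2)}$, and the $r$-forced response is identified with the open-loop trajectory and bounded by $\norml{x_m^o}\infty$, precisely as in Theorem \ref{thm:tt}. For the convolution, sub-multiplicativity together with the Cauchy--Schwartz inequality give
\[
\norm{\int_{t_2}^t \exp{A_m(t-\tau)} L e(\tau)\,d\tau} \le \frac{\norm{L}\,a_1}{\sqrt{2a_2}}\left(\int_{t_2}^\infty \norm{e}^2\,d\tau\right)^{1/2},
\]
into which I substitute the tighter error bound \eqref{extra2} of Lemma \ref{lem5}. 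This yields a bound on $\norm{x_m(t)}$ equal to the sum of the $\norml{x_m^o}\infty$-plus-decay part and a feedback part proportional to $\norm{L}\bigl(\norm{e(0)}/\abs{g}+\sqrt{\bar\lambda(\Theta_\text{max}^2+K_\text{max}^2)/(\gamma\abs{g})}\bigr)$.

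The final step is to square this sum, applying $(x+y)^2\le 2x^2+2y^2$ repeatedly. The free-response and $r$-forced terms collapse into $c_5(t)$, while the squared feedback term, after using $\norm{L}=\norm{-A_m+gI}\le\norm{A_m}+\abs{g}$ and hence $\norm{L}^2\le 2\norm{A_m}^2+2\abs{g}^2$, produces the remaining contributions carrying the factors $\bigl(\norm{A_m}^2/\abs{g}^2+1\bigr)$ and $\bar\lambda\bigl(\norm{A_m}^2/(\gamma\abs{g})+\abs{g}/\gamma\bigr)$ scaled by $a_1/a_2$, matching \eqref{trav1}.

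The main obstacle I anticipate is purely in the bookkeeping of this last step: I must propagate $\norm{L}^2\le 2\norm{A_m}^2+2\abs{g}^2$ through the squared feedback term so that the $\abs{g}$ appearing inside $\norm{L}$ combines with the $\abs{g}$-dependence coming from Lemma \ref{lem5} to leave no net positive power of $\abs{g}$ multiplying the initial-condition term. This cancellation is exactly what suppresses peaking, i.e.\ it is the $n$-dimensional analog of reducing the peaking exponent from $1/2$ to $0$ obtained in the scalar development. A secondary, more conceptual point requiring care is the identification of the restarted $r$-forced response with $\norml{x_m^o}\infty$ up to a decaying transient, since the open-loop reference $x_m^o$ is not literally restarted at $t_2$; this transient must be absorbed into $c_5$, just as the homogeneous term is in the scalar proof.
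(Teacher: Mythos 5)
Your proposal is correct and follows essentially the same route as the paper's proof: the variation-of-constants formula for \eqref{statesref} on $[t_2,\infty)$, the Hurwitz bound $\norm{\exp{A_m s}}\le a_1\exp{-a_2 s}$, Cauchy--Schwartz combined with Lemma \ref{lem5} on the error convolution, and finally squaring with $\norm{L}^2\le 2\norm{A_m}^2+2\abs{g}^2$ to obtain \eqref{trav1}. Your closing remarks correctly identify the only delicate points (the cancellation of the $\abs{g}$ powers and the absorption of the restarted forced response into $c_5$), both of which the paper handles exactly as you describe.
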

\begin{proof}
The existence of $a_1,a_2>0$ such that ${\exp{A_mt}  \leq a_1\exp{-a_2t}}$ follows from the fact that $A_m$ is Hurwtiz \cite{mol03}. Consider the dynamical system in \eqref{statesref} for ${t \geq t_2}$,
\ben\begin{split}
\norm{x_m(t)}_{t\geq t_2} \leq & \norml{x_m^o}\infty + \norm{ x_m(t_1)}a_1\exp{-a_2(t-t_1)} \\&+ \norm{L} a_1\int_{t_2}^\infty \exp{-a_2(t-\tau)} \norm{e(\tau)} d\tau.
\end{split}\een
Using Cauchy Schwartz along with \eqref{extra2} in Lemma \ref{lem5}
we have that
\ben\begin{split}
\norm{ x_m(t)}_{t\geq t_2} \leq & \norml{x_m^o}\infty+ \norm{\ x_m(t_2)}\exp{-\abs{a_m}(t-t_1)} \\&+ \frac{\norm L \sqrt{a_1}}{\sqrt{2{a_2}}}  \left(\frac{\norm{e(0)}}{\sqrt 2 \abs{g}}+\sqrt{\frac{\bar \lambda(\Theta_\text{max}^2+K_\text{max}^2)}{\gamma\abs{g}}}\right).
\end{split}\een
Squaring and using the fact that $L=-A_m + g I_{n\times n}$ and thus $\norm{L}\leq \norm{A_m}+ g$ we have that
\ben\begin{split}
\norm{x_m(t)}_{t \geq t_2}^2 \leq &\  c_5(t) + \frac{(\norm{A_m}+g)^2} {\abs{g}^2} {\frac{a_1}{a_2}} \norm{e(0)}^2 \\ & +\frac{ 2(\norm{A_m}+g)^2\bar \lambda }{\gamma\abs{g}} {\frac{a_1}{a_2}}\left(\Theta_\text{max}^2+K_\text{max}^2\right).
\end{split}\een
Inequality \eqref{trav1} follows since $(\norm{A_m}+\abs g)^2 \leq 2\norm{A_m}^2 +2 \abs{g}^2$.
\end{proof}

\begin{rem}
Just as in the scalar case, we have derived a bound  for $\norm{x_m(t)}^2_{t\geq t_2}$ which is once again a function of the initial condition of the plant and controller, but also dependent on a component which is proportional to ${\abs{g}}/\gamma$. Therefore, by choosing $\frac{\abs g}{\gamma}=1$ with $\gamma>0$ we can have bounded peaking in the reference model. \end{rem}

\begin{thm}\label{thm:states}
The adaptive system with the plant in \eqref{statesplant}, the controller in \eqref{stu}, the update law in \eqref{stateupdate}, the reference model as in \eqref{statesref}, $\Gamma$ and $L$ parameterized as in Assumption \ref{ass1}, with $t_2$ chosen as in Lemma \ref{lem4}, $g\leq g ^*$ where $g^*$ is given in Lemma \ref{lem5} and $\gamma$ chosen such that $\gamma=\abs{g}$ the following bounds are satisfied for all $\gamma \geq 1$:
\be\label{}
\begin{split}
\int_{t_2}^\infty \norm{\dot K}^2 d\tau\leq &  \norm{B}\left(\norm{e(0)}^2+K_\text{max}^2+\Theta_\text{max}^2\right) \norml{r(t)}\infty \\
\int_{t_2}^\infty \norm{\dot \Theta}^2 d\tau \leq &   \norm{B}\left(\norm{e(0)}^2+\Theta_\text{max}^2+K_\text{max}^2\right) \cdot  \\ & \quad \left(c_6+\frac{c_7}{g^2}+\frac{c_8}{\sqrt{\abs g}} + \frac{c_9}{\gamma}\right) \end{split} \ee  
where $c_6,c_7,c_8,c_9$ are independent of $\gamma$ and $g$, and are only a function of the initial conditions of the system and the fixed design parameters.
\end{thm}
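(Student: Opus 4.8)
The plan is to mirror the scalar argument of Theorem \ref{thm3} step for step, replacing each scalar quantity by its matrix/Frobenius-norm analog. The two inequalities are proved separately, and both rest on three ingredients already in hand: the $L_2$ bound on $e$ over $[t_2,\infty)$ from Lemma \ref{lem5} (Eq.~\eqref{extra2}), the $L_\infty$ bound on $x_m$ over $[t_2,\infty)$ from \eqref{trav1}, and the Gronwall--Bellman bound \eqref{gbstate} combined with Lemma \ref{lem:exp}. The recurring device is that Assumption \ref{ass1} forces $\Gamma=\gamma I$ and $P=\tfrac12 I$, and the prescribed choice $\gamma=\abs{g}$ is exactly what makes the $\gamma^2$ prefactor coming from the update law cancel against the $\abs{g}^{-2}$ decay of $\int_{t_2}^\infty \norm{e}^2\,d\tau$.

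For the bound on $\dot K$ I would start from \eqref{stateupdate}: with $\Gamma=\gamma I$ and $P=\tfrac12 I$ the unprojected direction is $-\tfrac{\gamma}{2}B^{T}e\,r^{T}$, and since the projection operator does not increase the Frobenius norm, the outer-product structure gives $\norm{\dot K}^2 \leq \tfrac{\gamma^2}{4}\norm{B}^2\norm{e}^2\norm{r}^2$. Integrating over $[t_2,\infty)$ and pulling out $\norml{r(t)}\infty^2$ leaves $\tfrac{\gamma^2}{4}\norm{B}^2\norml{r(t)}\infty^2\int_{t_2}^\infty\norm{e}^2\,d\tau$. Squaring \eqref{extra2} and substituting $\gamma=\abs{g}$ shows that $\gamma^2\int_{t_2}^\infty\norm{e}^2\,d\tau$ is bounded by a constant proportional to $\norm{e(0)}^2+\Theta_\text{max}^2+K_\text{max}^2$; this yields the first inequality directly.

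For $\dot\Theta$ I would again take the unprojected direction $-\tfrac{\gamma}{2}B^{T}e\,x_p^{T}$, write $x_p=e+x_m$ so that $\norm{x_p}^2\leq 2\norm{e}^2+2\norm{x_m}^2$, and obtain $\norm{\dot\Theta}^2\leq \tfrac{\gamma^2}{2}\norm{B}^2\norm{e}^2(\norm{e}^2+\norm{x_m}^2)$ --- the matrix analog of \eqref{above2}. Integrating and factoring $\int_{t_2}^\infty\norm{e}^4\,d\tau\leq \norm{e(t)}^2_{t\geq t_2}\int_{t_2}^\infty\norm{e}^2\,d\tau$, the common factor $\gamma^2\int_{t_2}^\infty\norm{e}^2\,d\tau$ collapses to the constant prefactor $\propto\norm{e(0)}^2+\Theta_\text{max}^2+K_\text{max}^2$ exactly as above. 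The bracket $c_6+c_7/g^2+c_8/\sqrt{\abs{g}}+c_9/\gamma$ is then assembled from the two surviving supremum factors: $\norm{x_m(t)}^2_{t\geq t_2}$ via \eqref{trav1}, whose fixed part contributes $c_6$ and whose $\norm{A_m}^2/\abs{g}^2$ part contributes $c_7/g^2$; and $\norm{e(t)}^2_{t\geq t_2}$, obtained by square-rooting \eqref{gbstate} and using Lemma \ref{lem:exp} to replace the transient $\exp{-\abs{g}t_2}$ by $\abs{g}^{-1/2}$, whose cross term contributes $c_8/\sqrt{\abs{g}}$ and whose projection residual contributes $c_9/\gamma$. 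Since every one of these constants is built only from the fixed design data and the initial conditions, $c_6,c_7,c_8,c_9$ are independent of $\gamma$ and $g$, as claimed.

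The main obstacle is the careful bookkeeping of the powers of $\abs{g}$ (equivalently $\gamma$). One must verify that the $\gamma^2$ from the adaptive gain is precisely absorbed by the $\abs{g}^{-2}$ decay of $\int_{t_2}^\infty\norm{e}^2\,d\tau$ under $\gamma=\abs{g}$, and one must check that the supremum term $\norm{e(t)}^2_{t\geq t_2}$ genuinely produces an $\abs{g}^{-1/2}$ contribution --- arising as the cross term between the $\exp{-\abs{g}t_2}\leq\abs{g}^{-1/2}$ transient and the $O(\gamma^{-1/2})$ projection residual in the square-rooted Gronwall bound --- rather than collapsing uniformly to $O(\gamma^{-1})$. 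The matrix generalization itself (Frobenius norms, the outer products $e\,x_p^{T}$ and $e\,r^{T}$, and the estimate $\norm{L}\leq\norm{A_m}+\abs{g}$ from Assumption \ref{ass1}) is routine once these exponent cancellations are confirmed.
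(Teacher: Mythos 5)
Your proposal is correct and is essentially the paper's own proof: the paper simply states that the result ``follows the same steps as used to derive the bounds in Theorem \ref{thm3},'' and your write-up is exactly that scalar argument transcribed with Frobenius norms, $\Gamma=\gamma I$, $P=\tfrac12 I$, and the bounds from Lemma \ref{lem5}, \eqref{gbstate}, and \eqref{trav1} in place of their scalar counterparts. The only (immaterial) discrepancies are in loose constants (e.g.\ powers of $\norm{B}$ and $\norml{r(t)}\infty$), which are already present in the paper's own statement of Theorem \ref{thm3} versus \eqref{reffer1}.
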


\begin{proof}
The proof follows the same steps as used to derive the bounds in Theorem \ref{thm3}.\end{proof}

\begin{rem}
It should be noted that if $\gamma$ and $\abs{g}$ are increased while holding $\gamma=\abs{g}$, the L-2 norms of the derivatives of the adaptive parameters can be  decreased significantly. \end{rem}

\begin{rem} The similarity of the bounds in Theorem \ref{thm:states} to those in Theorem \ref{thm3} implies that the same bounds on frequencies and corresponding amplitudes of the overall adaptive systems as in Theorem \ref{thm:p} hold here in the higher-order plant as well. \end{rem}

We note that robustness issues have not been addressed with the CRM architecture in this work. However, recent results in \cite{mat12,mat13,hus13c1,hus13c2} have shown that adaptive systems do have a time-delay margin and robustness to unmodeled dynamics when projection is used in the update law. While we expect similar results to hold with CRM as well, a detailed investigation of the same as well as comparisons of their robustness properties to their ORM counterparts are topics for further research.

\section{CRM Composite Control with Observer Feedback}
\label{cmracco}
In this section, we show that the tools introduced to demonstrate smooth transient in CRM-adaptive systems can be used to analyze CMRAC systems introduced in \cite{duarte1989combined,slotine1989composite,eugeneTAC09}. As mentioned in the introduction, these systems were observed to exhibit smooth transient response, and yet no analytical explanations have been provided until now for this behavior. Our focus is on first-order plants for the sake of simplicity. Similar to Section \ref{sec:ndim}, all results derived here can be directly extended to higher order plants whose states are accessible.

The CMRAC system that we discuss in this paper differs from that in \cite{duarte1989combined} and includes an observer whose state is fed back for control rather than the plant state. As mentioned in the introduction, we denote this class of systems as CMRAC-CO and is described by the plant in \eqref{eqd:plant}, the reference model in \eqref{eqd:reference}, an observer as
\be\label{cxo}
\dot x_o(t) = \ell (x_o - x_p) + (a_m-k_p\hat\theta) x_o(t) + k_p u(t),
\ee
and the control input is given by 
\be\label{ccontrol}
u= \theta x_o + k^* r.
\ee
In the above $k_p$ is assumed to be known for ease of exposition. The feedback gain $\ell$ is chosen so that 
\be\label{gt}
g_\theta \triangleq a_m+\ell+\abs{k_p\theta^*}<0.
\ee

Defining ${e_m = x_p-x_m}$ and ${e_o=x_o-x_p}$, the error dynamics are now given by
\be
\begin{split}\label{}
\dot e_m(t) & =  (a_m+\ell) e_m +k_p \tilde \theta x_o +k_p\theta^*e_o \\
\dot e_o(t) & =  (a_m+\ell) e_o - k_p \bar\theta x_o. 
\end{split}
\ee
where ${\tilde\theta=\theta-\theta^*}$ and  ${\bar\theta=\hat \theta -\theta^*}$
with $\theta^*$ satisfying ${a_p + k_p\theta^* = a_m}$ and ${k_p k^*= k_m}$.
The update laws for the adaptive parameters are then defined with the update law
\be\label{cupdate}\begin{split}
\dot \theta & = \text{Proj}_{\Omega}  (-\gamma\text{sgn}{(k_p)} e_{m} x_o,\theta)  - \eta\epsilon_\theta\\
\dot {\hat \theta} & = \text{Proj}_{\Omega} ( \gamma\text{sgn}(k_p) e_o x_o,\hat \theta)+\eta  \epsilon_\theta\\
\epsilon_\theta & = \theta-\hat\theta
\end{split}\ee
where $\gamma,\eta>0$ are free design parameters. As before we define the bounded set
\be
 \Theta_\text{max}\triangleq \text{max} \left\{ \sup_{\theta,\theta^*\in\Omega} \norm{\tilde{\theta}}, \sup_{\hat\theta,\theta^*\in\Omega} \norm{\bar{\theta}}  \right\}. 
\ee
We first establish stability and then discuss  the improved transient response.

\subsection{Stability}
The stability of the CMRAC-CO adaptive system given by \eqref{eqd:plant}, \eqref{eqd:reference}, \eqref{cxo}-\eqref{cupdate} can be verified with the following Lyapunov candidate
\be\label{clyap}
 V(t) = \frac{1}{2}\left(e_m^2 + e_o^2 + \frac{\abs{k_p}}{\gamma} \tilde\theta^2 + \frac{\abs{k_p}}{\gamma}\bar\theta^2 \right)
\ee
which has the following derivative
\be\label{vdc}
\dot V \leq g_\theta e_m^2 + g_\theta e_o^2 -\frac{\eta \abs{k_p}}{\gamma} \epsilon_\theta^2.
\ee
Boundedness of all signals in the system follows since ${g_\theta<0}$. 
From the integration of \eqref{vdc} we have $\left\{e_m,e_o,\epsilon_\theta \right\} \in \mathcal L_\infty \cap \mathcal L_2$ and thus $\lim_{t \to \infty}\{e_m,e_o,\epsilon_\theta\} =\{0,0,0\}$. Using the Gronwall-Bellman Lemma as was previously used in \eqref{gb1}-\eqref{ert}, we can deduce that
\be\label{gbstate}
V(t) \leq \frac{1}{2}\left({e_m(0)}^2+e_o(0)^2\right) \exp{-2\abs{g_\theta}t} +  \frac{\abs{k_p}}{\gamma}\theta_\text{max}^2.
\ee
It should be noted that the presence of a non-zero $\ell$ is crucial for stability, as $g_\theta$ cannot be guaranteed to be negative if $\ell=0$.

\subsection{Transient performance of CMRAC-CO}
Similar to Sections II and III we divide the timeline into $[0,t_3]$ and $[t_3,\infty)$, where $t_3$ is arbitrarily small. We first derive bounds for the system states over the initial $[0,t_3]$ in Lemma \ref{lem6}, bounds for the tracking, observer, parameter estimation errors $e_m$, $e_o$ and $\epsilon_\theta$ over $[t_3,\infty)$ in Lemma \ref{lem7}, bounds for $x_o$ over $[t_3,\infty)$ in Theorem \ref{xob}, and finally bounds for the parameter derivatives $\dot\theta$ and $\dot{\hat\theta}$ in Theorem \ref{thm8}.
\begin{lem}\label{lem6}
Consider the CMRAC-CO adaptive system with the plant in \eqref{eqd:plant}, with the controller defined by \eqref{ccontrol}, the update law in \eqref{cupdate} and with the reference model as in \eqref{eqd:reference}. For all $\delta>1$ and $\epsilon>0$, there exists a time $t_3\geq0$ such that
\be\begin{split}\label{}
\norm{x_p(t)} &\leq \delta \norm{x_p(0)} + \epsilon \left(\norml{r(t)}\infty  +\sqrt{2V(0)}\right) \\
\norm{x_o(t)}& \leq \delta \norm{x_p(0)} + \epsilon \norml{r(t)}\infty  + (1+\epsilon) \sqrt{2V(0)}\\ 
\end{split}\ee
$\forall\   0\leq t\leq t_3$.
\end{lem}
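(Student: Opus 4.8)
The plan is to mirror the structure of Lemma~\ref{lem1}, which established the analogous finite-time bounds in the scalar direct-control case, but now accounting for the fact that the control input in \eqref{ccontrol} depends on the observer state $x_o$ rather than the plant state $x_p$. The essential idea is that over a sufficiently small initial interval $[0,t_3]$, neither the plant state nor the observer state can drift far from its initial condition, because the exponential growth factors and the convolution integrals that multiply the bounded driving terms can all be made as small as desired by shrinking $t_3$.

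\textbf{Bounding $x_p$.} First I would rewrite the plant dynamics \eqref{eqd:plant} under the control law \eqref{ccontrol}. Substituting $u=\theta x_o + k^* r$ gives $\dot x_p = a_p x_p + k_p\theta x_o + k_p k^* r$. The cleanest route is to add and subtract $a_m x_p$ and use $a_p+k_p\theta^*=a_m$, so that $\dot x_p = a_m x_p + k_p\tilde\theta x_o + k_p\theta^* x_o + k_p k^* r$; since $x_o = x_p + e_o$, this can be organized as $\dot x_p = (a_m + k_p\theta^*) x_p + k_p(\theta - \theta^*)x_o + \cdots$, with all coefficients on $x_p$ and all forcing terms bounded via $\Theta_\text{max}$, $\norml{r}\infty$, and the bound $\norm{x_o}\le\norm{x_p}+\norm{e_o}$ together with $\norm{e_o}^2\le 2V(0)$ from \eqref{clyap}. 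Applying the variation-of-constants formula and the comparison-lemma argument exactly as in Lemma~\ref{lem1}, the exponential prefactor on $\norm{x_p(0)}$ can be bounded by $\delta$ and the convolution integral multiplying the bounded forcing by $\epsilon$ for all $t\le t_3$, provided $t_3$ is chosen small enough. The extra $\epsilon\sqrt{2V(0)}$ term in the stated $x_p$ bound arises precisely because the forcing now contains the observer error $e_o$, whose supremum is controlled by $\sqrt{2V(0)}$.

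\textbf{Bounding $x_o$.} For the observer, the key relation is simply the triangle inequality $\norm{x_o}\le\norm{x_p}+\norm{e_o}$, which is why the $x_o$ bound looks like the $x_p$ bound with the coefficient $(1+\epsilon)\sqrt{2V(0)}$ replacing $\epsilon\sqrt{2V(0)}$: we pick up one full $\sqrt{2V(0)}$ from $\norm{e_o}\le\sqrt{2V(0)}$ (again from \eqref{clyap}) on top of the $\epsilon\sqrt{2V(0)}$ already present in the $x_p$ estimate. I would verify the bookkeeping of the $\delta$, $\epsilon$, and $(1+\epsilon)$ coefficients by carrying the $x_p$ bound through the triangle inequality and collecting terms.

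\textbf{Main obstacle.} The principal subtlety, as opposed to a purely routine repetition of Lemma~\ref{lem1}, is that the effective ``open-loop'' growth rate governing $x_p$ is no longer $a_m$ alone but involves $a_m+k_p\theta^*$, which need not be negative, and the forcing terms now couple the plant, observer, and parameter errors simultaneously. I would therefore take care to bound the time-varying coefficient on $x_p$ uniformly using $\Theta_\text{max}$ and the known sign/magnitude of $k_p\theta^*$, so that a single constant exponential rate can be used in the comparison argument; once that uniform rate is in hand, the existence of a suitable $t_3$ making both the prefactor $\le\delta$ and the integral $\le\epsilon$ follows from continuity of the exponential at $t=0$, exactly as before. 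The remaining steps are then direct substitution and the triangle inequality, so the finite-time nature of the interval is what makes the instability of the intermediate dynamics harmless.
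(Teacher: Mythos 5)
Your proposal is correct and follows essentially the same route as the paper: write the closed-loop plant as $\dot x_p = (a_p+k_p\theta)x_p + k_p(\theta e_o + k^*r)$ (your $a_m$-centred rewriting is algebraically the same thing, since $a_p+k_p\theta = a_m+k_p\tilde\theta$), bound the possibly-positive coefficient on $x_p$ uniformly via projection, apply variation of constants to choose $t_3$ so the prefactor is $\le\delta$ and the convolution factor is $\le\epsilon$, use $\norm{e_o}\le\sqrt{2V(0)}$ from the Lyapunov function for the forcing term, and finish with $\norm{x_o}\le\norm{x_p}+\norm{e_o}$ to collect the $(1+\epsilon)\sqrt{2V(0)}$ coefficient. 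The only blemish is the stray intermediate expression $(a_m+k_p\theta^*)$, which should read $a_p+k_p\theta^*=a_m$, but this does not affect the argument.
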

\begin{proof}
The plant in \eqref{eqd:reference} with the controller in \eqref{ccontrol} can be represented as
\ben
\dot x_p = (a_p +k_p\theta) x_p + k_p  \left(r +\theta e_o\right)
\een
where we note that $(a_p +k_p\theta) $ can be positive. This leads to the inequality
\ben\begin{split}
\norm{x_p(t)} \leq &\norm{x_p(0)} \exp{(a_p +\abs{k_p}\Theta_\text{max})t}  + \int_0^t\exp{(a_p +\abs{k_p}\Theta_\text{max})(t-\tau) } \cdot \\ & \hspace{.6 in}\abs{k_p}\left(\Theta_\text{max}\norm{r(\tau)}+\Theta_\text{max} \norm{e_o(\tau)}\right) d\tau.
\end{split}\een
For any ${\delta>}1$ and any ${\epsilon>0}$, it follows from the above inequality that a $t_3$ exists such that 
$\exp{(a_p +\abs{k_p}\Theta_\text{max})t}\leq \delta$ and ${\int_0^t\exp{(a_p +\abs{k_p}\Theta_\text{max})(t-\tau) }\abs{k_p}\Theta_\text{max} d\tau}\leq \epsilon$, ${0\leq t \leq t_3}$ given $\delta>0$ and $\epsilon>0$. From the structure of the Lyapunov candidate in \eqref{clyap} and the fact that $\dot V\leq 0$ we have that $\norml{e_o(t)}\infty\leq \sqrt{2V(0)}$. The bound on $x_o(t)$ follows from the fact that $\norm{x_o}\leq \norm{x_p}+\norm{e_o}$.
\end{proof}

\begin{lem}\label{lem7}
Consider the adaptive system with the plant in \eqref{eqd:plant}, the controller in \eqref{ccontrol}, the update law in \eqref{cupdate}, and the reference model as in \eqref{eqd:reference}. Given a time $t_2\geq0$, there exists a $g_\theta^*$ s.t.
\be\label{ceb}\begin{split}
\sqrt{\int_{t_3}^\infty {e_m}^2 d\tau} &\leq  \frac{\sqrt{e_m(0)^2+ e_o(0)^2} }{\sqrt{2}\abs{g_\theta}} + \sqrt{\frac{\abs{k_p}}{\gamma \abs{g_\theta}}}\Theta_\text{max}\\
\sqrt{\int_{t_3}^\infty {e_o}^2 d\tau} &\leq \frac{\sqrt{e_m(0)^2+ e_o(0)^2 }}{\sqrt{2}\abs{g_\theta}} + \sqrt{\frac{\abs{k_p}}{\gamma \abs{g_\theta}}}\Theta_\text{max}\\
\sqrt{\int_{t_3}^\infty {\epsilon_\theta}^2 d\tau} &\leq \frac{\sqrt{\gamma}\sqrt{e_m(0)^2+ e_o(0)^2 }}{\sqrt{2\eta \abs{k_p}\abs{g_\theta}}} + \sqrt{\frac{1}{\eta}}\Theta_\text{max}\\
\end{split}
\ee
for all $g_\theta\leq g_\theta^*$.
\end{lem}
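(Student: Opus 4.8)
The plan is to follow the template already established in the proofs of Lemma~\ref{et1} and Lemma~\ref{lem5}, but now treating the three error signals $e_m$, $e_o$, and $\epsilon_\theta$ one at a time. Since $g_\theta<0$, I would first rewrite the Lyapunov derivative \eqref{vdc} as $\dot V \leq -\abs{g_\theta} e_m^2 - \abs{g_\theta} e_o^2 - \frac{\eta\abs{k_p}}{\gamma}\epsilon_\theta^2$, so that each of the three summands is separately a valid upper bound on $\dot V$. Because every term is nonpositive, $V$ is nonincreasing and bounded below, so $V(\infty)$ exists and is nonnegative. Integrating $-\dot V$ from $t_3$ to $\infty$ and discarding $V(\infty)\geq 0$ then gives $\abs{g_\theta}\int_{t_3}^\infty e_m^2\,d\tau \leq V(t_3)$, and identically $\abs{g_\theta}\int_{t_3}^\infty e_o^2\,d\tau \leq V(t_3)$ and $\frac{\eta\abs{k_p}}{\gamma}\int_{t_3}^\infty \epsilon_\theta^2\,d\tau \leq V(t_3)$.

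Next I would substitute the Gronwall--Bellman bound \eqref{gbstate} at $t=t_3$, namely $V(t_3)\leq \frac12\bigl(e_m(0)^2+e_o(0)^2\bigr)\exp{-2\abs{g_\theta}t_3} + \frac{\abs{k_p}}{\gamma}\Theta_\text{max}^2$, into each of the three integral bounds and divide by the respective coefficient ($\abs{g_\theta}$ for the two error integrals, $\frac{\eta\abs{k_p}}{\gamma}$ for the $\epsilon_\theta$ integral). Taking square roots and applying $\sqrt{a+b}\leq\sqrt a+\sqrt b$ splits each right-hand side into a transient term carrying $\sqrt{\exp{-2\abs{g_\theta}t_3}}=\exp{-\abs{g_\theta}t_3}$ and a steady term in $\Theta_\text{max}$. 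For the $e_m$ and $e_o$ integrals the steady term is $\sqrt{\abs{k_p}/(\gamma\abs{g_\theta})}\,\Theta_\text{max}$, while for $\epsilon_\theta$ the factor $\gamma/(\eta\abs{k_p})$ cancels the $\abs{k_p}/\gamma$ to leave exactly $\sqrt{1/\eta}\,\Theta_\text{max}$, matching the displayed form in \eqref{ceb}.

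The closing step absorbs the transient factor $\exp{-\abs{g_\theta}t_3}$. Since $t_3$ was fixed in Lemma~\ref{lem6} independently of $g_\theta$, I would invoke Lemma~\ref{lem:exp} with $a=t_3$ and $x=g_\theta$ to obtain a threshold $g_\theta^*<0$ with $\exp{-\abs{g_\theta}t_3}\leq\abs{g_\theta}^{-1/2}$ for all $g_\theta\leq g_\theta^*$. Multiplying the transient terms (which already carry a $1/\sqrt{\abs{g_\theta}}$ prefactor for $e_m,e_o$ and $\sqrt{\gamma}/\sqrt{\eta\abs{k_p}}$ for $\epsilon_\theta$) by $\abs{g_\theta}^{-1/2}$ then produces precisely the $1/\abs{g_\theta}$ dependence in the first two bounds and the $1/\sqrt{\abs{g_\theta}}$ dependence in the third, completing all three inequalities in \eqref{ceb}.

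The one point I regard as the main, though mild, obstacle is verifying that a \emph{single} threshold $g_\theta^*$ suffices for all three bounds rather than three separately derived thresholds that would then need to be intersected. This is immediate here because the three transient terms share the identical exponential $\exp{-\abs{g_\theta}t_3}$, so one application of Lemma~\ref{lem:exp} with $a=t_3$ handles them at once. A secondary bookkeeping check is that the $\Theta_\text{max}$ appearing in \eqref{gbstate} is the combined bound over both $\tilde\theta$ and $\bar\theta$ defined above \eqref{clyap}, so the same $\Theta_\text{max}$ legitimately populates each steady term.
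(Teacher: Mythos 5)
Your argument is correct and is exactly the template the paper intends: the paper actually omits the proof of this lemma, but your term-by-term integration of $-\dot V$, substitution of the Gronwall--Bellman bound at $t_3$, the $\sqrt{a+b}\leq\sqrt{a}+\sqrt{b}$ split, and the single application of Lemma~\ref{lem:exp} to absorb $\exp{-\abs{g_\theta}t_3}$ mirror the proofs of Lemma~\ref{et1} and Lemma~\ref{lem5} line for line, and the constants work out to match \eqref{ceb} exactly (including the cancellation giving $\sqrt{1/\eta}\,\Theta_\text{max}$ in the $\epsilon_\theta$ bound). Your two closing observations --- that one threshold $g_\theta^*$ suffices because all three transient terms share the same exponential, and that $\Theta_\text{max}$ is the combined supremum over $\tilde\theta$ and $\bar\theta$ --- are both correct and resolve the only points that needed checking.
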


\begin{thm}\label{xob} Consider the adaptive system with the plant in \eqref{eqd:plant}, the controller in \eqref{ccontrol}, the update law in \eqref{cupdate}, the reference model as in \eqref{eqd:reference}, with $t_3$ chosen as in Lemma \ref{lem6} and $g_\theta\leq g_\theta^*$ where $g_\theta^*$ is given in Lemma \ref{lem7}. It can be shown that
\be\begin{split}
\norm{x_o(t)}_{t \geq t_3}^2 \leq &\  c_{10}(t) +\frac{\abs{\ell}^2} {\abs{g_\theta}^2} 2 \sqrt{a_4}\left({e_m(0)^2+ e_o(0)^2 }\right)\\ & + {\frac{\abs{\ell}^2\abs{k_p}}{\gamma \abs{g_\theta}}}4a_4\Theta_\text{max}^2
\end{split}\ee
where
\ben\begin{split}
c_{10} & \triangleq  2\left(a_3\abs{k_p}\norml{r}\infty + \norm{ x_o(t_3)}\exp{a_\theta(t-t_3)} \right)^2 \\
a_\theta &\triangleq a_m+{k_p}\epsilon_\theta \\
\end{split}
\een
and
\ben\begin{split}
\int_{t_3}^\infty \exp{a_\theta(t-\tau)}d\tau & \leq a_3 \\
{\int_{t_3}^\infty \exp{2a_\theta(t-\tau)} d\tau} & \leq a_4
\end{split}
\een
with $0\leq a_i<\infty,\ i\in\{3,4\}$.
\end{thm}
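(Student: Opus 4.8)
The plan is to treat the observer \eqref{cxo} as a scalar linear time-varying system driven by the residual signals $e_o$ and $r$, and then mimic the argument of Theorem~\ref{thm:tt} almost verbatim. First I would substitute the control \eqref{ccontrol} into \eqref{cxo} and eliminate the plant state through $x_p = x_o - e_o$. Since $\ell(x_o - x_p) = \ell e_o$, the combination $(a_m - k_p\hat\theta)x_o + k_p\theta x_o$ collapses to $(a_m + k_p\epsilon_\theta)x_o$, and the observer dynamics reduce to
\ben
\dot x_o = a_\theta x_o + \ell e_o + k_p k^* r, \qquad a_\theta = a_m + k_p\epsilon_\theta,
\een
which is precisely the time-varying first-order form whose transition function is $\exp{\int_\tau^t a_\theta(s)\,ds}$, written $\exp{a_\theta(t-\tau)}$ as in the statement.

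Next, I would write the variation-of-constants solution for $t \geq t_3$ and take absolute values, producing three contributions: an initial-condition term $\norm{x_o(t_3)}\exp{a_\theta(t-t_3)}$, a reference term, and an observer-error term. For the reference term, pulling out $\norml{r}\infty$ and using $\int_{t_3}^\infty \exp{a_\theta(t-\tau)}\,d\tau \leq a_3$ yields a bound proportional to $\norml{r}\infty$; grouping it with the initial-condition term and applying $(x+y)^2 \leq 2x^2 + 2y^2$ reproduces exactly $c_{10}(t)$. For the observer-error term I would invoke the Cauchy--Schwarz inequality together with $\int_{t_3}^\infty \exp{2a_\theta(t-\tau)}\,d\tau \leq a_4$, bounding $\abs{\ell}\int_{t_3}^\infty \exp{a_\theta(t-\tau)}\norm{e_o}\,d\tau$ by $\abs{\ell}\sqrt{a_4}\,\big(\int_{t_3}^\infty \norm{e_o}^2\,d\tau\big)^{1/2}$.

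It then remains to substitute the $\mathcal L_2$ bound on $e_o$ from the second line of \eqref{ceb} in Lemma~\ref{lem7}, square the whole inequality (splitting the final cross term once more via $(x+y)^2 \leq 2x^2+2y^2$), and collect terms. The $\abs{\ell}^2$ prefactor then multiplies both the initial-error quantity $e_m(0)^2 + e_o(0)^2$, divided by $\abs{g_\theta}^2$, and the $\Theta_\text{max}^2$ term, divided by $\gamma\abs{g_\theta}$, which are exactly the two remaining summands of the claimed bound.

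The main obstacle --- the only genuinely new step compared with Theorem~\ref{thm:tt} --- is establishing that $a_3$ and $a_4$ are finite, since $a_\theta$ is time-varying and need not be pointwise negative ($k_p\epsilon_\theta$ may exceed $\abs{a_m}$ during the transient). Here I would use that $\epsilon_\theta \in \mathcal L_2 \cap \mathcal L_\infty$, established in the stability analysis via \eqref{vdc}: by Cauchy--Schwarz, $\int_\tau^t \epsilon_\theta(s)\,ds \leq \sqrt{t-\tau}\,\norml{\epsilon_\theta}2$, so the accumulated exponent $\int_\tau^t a_\theta = a_m(t-\tau) + k_p\int_\tau^t \epsilon_\theta$ is dominated by the strictly negative linear term $a_m(t-\tau)$ once $t-\tau$ is large. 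Hence $\exp{a_\theta(t-\tau)}$ decays and the defining integrals for $a_3$ and $a_4$ converge with $0 \leq a_3, a_4 < \infty$, as asserted.
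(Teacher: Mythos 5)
Your proposal follows essentially the same route as the paper's proof: the same reduction of the observer to $\dot x_o = a_\theta x_o + \ell e_o + k_p k^* r$ with $a_\theta = a_m + k_p\epsilon_\theta$, the same variation-of-constants decomposition into initial-condition, reference, and $e_o$ terms, Cauchy--Schwarz against $a_4$, substitution of the $e_o$ bound from Lemma~\ref{lem7}, and squaring via $(a+b)^2\leq 2a^2+2b^2$. Your justification that $a_3,a_4<\infty$ (using $\epsilon_\theta\in\mathcal L_2$ so that $\int_\tau^t a_\theta \leq a_m(t-\tau)+\abs{k_p}\sqrt{t-\tau}\,\norml{\epsilon_\theta}{2}$) is in fact somewhat more careful than the paper's, which only invokes $\epsilon_\theta(t)\to 0$.
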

\begin{proof} Given that $\lim_{t\to\infty} \epsilon_\theta(t)=0$ we have from \eqref{gt} that $\lim_{t\to\infty}a_\theta=a_m$. Thus $\lim_{t\to\infty}\exp{a_\theta t}=0$. Therefore, $a_3,a_4<\infty$. Consider the dynamical system in \eqref{cxo} for $t \geq t_3$.
\ben\begin{split}
\norm{x_o(t)}_{t\geq t_3} \leq &  \norm{ x_o(t_3)}\exp{a_\theta(t-t_3)}  \\
 & +\int_{t_3}^\infty \exp{a_\theta(t-\tau)} \left(\abs{\ell}\norm{e_o(\tau)} +\abs{k_p}\norm{r(\tau)}\right)d\tau.
\end{split}\een
Using Cauchy Schwartz and Lemma \ref{lem7} as before
we have 
\ben\begin{split}
\norm{ x_o(t)}_{t\geq t_3} \leq & a_3\abs{k_p}\norml{r}\infty+ \norm{\ x_o(t_3)}\exp{-\abs{a_m}(t-t_3)} \\&+ \abs{\ell} \sqrt{a_4} \frac{\sqrt{e_m(0)^2+ e_o(0)^2 }}{\sqrt{2}\abs{g_\theta}} \\ &+\abs{\ell} \sqrt{a_4}a_4  \sqrt{\frac{\abs{k_p}}{\gamma \abs{g_\theta}}}\Theta_\text{max}.
\end{split}\een
Squaring and using the inequality $(a+b)^2\leq 2a^2 + 2b^2$ twice, we have our result.
\end{proof}

\begin{cor} For the system presented in Theorem \ref{xob} setting $\gamma = \abs{g_\theta}$ and taking the limit as $\ell \to-\infty$ the following bound holds for $x_o(t)$
\be\begin{split}
\lim_{\ell\to-\infty} \norm{x_o(t)}_{t \geq t_3}^2 \leq &\  c_{10}(t) + 2 \sqrt{a_4}\left({e_m(0)^2+ e_o(0)^2 }\right)\\ & + \abs{k_p} 4a_4\Theta_\text{max}^2.
\end{split}
\ee 
\end{cor}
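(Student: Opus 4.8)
The plan is to obtain the corollary directly from the explicit bound already furnished by Theorem \ref{xob}, so that no new dynamical analysis is required; the whole argument reduces to evaluating the asymptotic behaviour of the $\ell$-dependent prefactors. First I would recall from \eqref{gt} that $g_\theta = a_m + \ell + \abs{k_p\theta^*}$, i.e.\ $g_\theta$ is merely an affine shift of $\ell$ by the fixed constant $a_m + \abs{k_p\theta^*}$. Since $\ell<0$ and we let $\ell\to-\infty$, for all sufficiently negative $\ell$ we have $g_\theta<0$ and $\abs{g_\theta} = \abs{\ell} - (a_m + \abs{k_p\theta^*})$, whence
\ben
\frac{\abs{\ell}}{\abs{g_\theta}} = \frac{\abs{\ell}}{\abs{\ell} - (a_m + \abs{k_p\theta^*})} \longrightarrow 1 \qquad \text{as } \ell \to -\infty,
\een
and therefore $\abs{\ell}^2/\abs{g_\theta}^2 \to 1$ as well. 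This single limit is the crux of the result.

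Next I would substitute the prescribed gain choice $\gamma = \abs{g_\theta}$ into the bound of Theorem \ref{xob}. Under this substitution the third term's prefactor $\abs{\ell}^2\abs{k_p}/(\gamma\abs{g_\theta})$ collapses to $\abs{\ell}^2\abs{k_p}/\abs{g_\theta}^2$, so that both the second and the third terms of the theorem now carry the common factor $\abs{\ell}^2/\abs{g_\theta}^2$. Passing to the limit $\ell\to-\infty$ term by term and invoking the limit established above, the second term tends to $2\sqrt{a_4}\left(e_m(0)^2+e_o(0)^2\right)$ and the third tends to $\abs{k_p}\,4a_4\Theta_\text{max}^2$, which together with $c_{10}(t)$ reproduces exactly the claimed inequality.

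The one point that warrants care, and which I regard as the main (if modest) obstacle, is to confirm that the remaining quantities $c_{10}(t)$, $a_3$, $a_4$ and $a_\theta$ neither diverge nor otherwise interfere as $\ell\to-\infty$. Here I would note that $a_\theta = a_m + k_p\epsilon_\theta$ with $\epsilon_\theta(t)\to 0$ established in the stability analysis, so $a_\theta\to a_m$, an $\ell$-independent Hurwitz rate; consequently the convergence integrals defining $a_3$ and $a_4$ stay finite and $\ell$-free in the limit. Likewise $\norm{x_o(t_3)}$ entering $c_{10}$ is controlled by the $\ell$-independent bound of Lemma \ref{lem6}, which depends only on the initial data and $V(0)$, so $c_{10}(t)$ is simply retained symbolically on the right-hand side rather than being disturbed by the limit. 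With these observations the termwise passage to the limit is fully justified and the corollary follows.
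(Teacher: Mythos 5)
Your proof is correct and is exactly the argument the paper intends: the corollary is stated without proof as an immediate consequence of Theorem \ref{xob}, and the key fact you isolate, $\lim_{\ell\to-\infty}\ell^2/g_\theta^2=1$ because $g_\theta$ is an affine shift of $\ell$, is precisely the observation the paper itself records in the remark following Theorem \ref{thm8}. Your additional check that $a_3$, $a_4$, $a_\theta$ and $c_{10}$ are not perturbed by the limit is a sensible bit of diligence but does not change the route.
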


\begin{thm}\label{thm8}
The adaptive system with the plant in \eqref{eqd:plant}, the controller defined by \eqref{ccontrol}, the update law in \eqref{cupdate} with the reference model as in \eqref{eqd:reference}, with $t_3$ chosen as in Lemma \ref{lem6} and $g_\theta\leq g_\theta^*$ where $g_\theta^*$ is given in Lemma \ref{lem7} and $\gamma$ chosen such that $\gamma=\abs{g_\theta}$ the following bounds are satisfied for all $\gamma \geq 1$:
\be\label{}
\int_{t_3}^\infty \norm{\dot \theta}^2 d\tau \leq \alpha \text{ and } \int_{t_3}^\infty \norm{\dot {\hat \theta}}^2 d\tau \leq \alpha
\ee  
with 
\ben
\alpha \triangleq \left(\frac{\ell^2}{g_\theta^2}+\frac{\eta}{\abs{k_p}}\right)\left(e_m(0)^2+ e_o(0)^2 + \abs{k_p}\Theta_\text{max}^2\right)c_{11} 
\een
where $c_{11}$ is independent of $\gamma$ and $g_\theta$, and is only a function of the initial conditions of the system and the fixed design parameters.\end{thm}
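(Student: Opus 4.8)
The plan is to follow the template of Theorem \ref{thm3}, now applied to the coupled update law \eqref{cupdate}. The starting point is the non-expansiveness of the projection operator, $\norm{\text{Proj}_\Omega(y,\theta)}\leq\norm{y}$, which together with the triangle inequality turns \eqref{cupdate} into the pointwise bounds $\norm{\dot\theta}\leq \gamma\abs{e_m}\abs{x_o}+\eta\abs{\epsilon_\theta}$ and $\norm{\dot{\hat\theta}}\leq \gamma\abs{e_o}\abs{x_o}+\eta\abs{\epsilon_\theta}$. Squaring, using $(a+b)^2\leq 2a^2+2b^2$, integrating over $[t_3,\infty)$, and pulling the supremum of $\norm{x_o}$ out of the first integral gives
\ben
\int_{t_3}^\infty \norm{\dot\theta}^2 d\tau \leq 2\norm{x_o(t)}_{t\geq t_3}^2\,\gamma^2\!\!\int_{t_3}^\infty\! e_m^2 d\tau + 2\eta^2\!\!\int_{t_3}^\infty\! \epsilon_\theta^2 d\tau,
\een
and the same inequality with $e_o^2$ replacing $e_m^2$ for $\dot{\hat\theta}$. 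Since the first two bounds of \eqref{ceb} in Lemma \ref{lem7} are identical in $e_m$ and $e_o$, the two parameter derivatives will end up with the same bound $\alpha$.

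Next I would substitute the $L_2$ bounds of Lemma \ref{lem7}, exploiting the design choice $\gamma=\abs{g_\theta}$. Under this choice the ratio $\sqrt{\gamma}/\sqrt{\abs{g_\theta}}$ in the third bound of \eqref{ceb} collapses to unity, and squaring the first bound shows that $\gamma^2\int_{t_3}^\infty e_m^2 d\tau$ is dominated by a $\gamma$- and $g_\theta$-independent multiple of $(e_m(0)^2+e_o(0)^2+\abs{k_p}\Theta_\text{max}^2)$, exactly as \eqref{tt22} was derived for Theorem \ref{thm3}. Squaring the third bound and multiplying by $\eta^2$ similarly yields $\eta^2\int_{t_3}^\infty\epsilon_\theta^2 d\tau\leq \frac{2\eta}{\abs{k_p}}(e_m(0)^2+e_o(0)^2+\abs{k_p}\Theta_\text{max}^2)$, which is the source of the $\eta/\abs{k_p}$ term in $\alpha$.

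The third step is to insert the bound on $\norm{x_o(t)}_{t\geq t_3}^2$ from Theorem \ref{xob}, evaluated at $\gamma=\abs{g_\theta}$, which carries the $\ell^2/g_\theta^2$ dependence. Combining the three estimates and factoring out the common $(e_m(0)^2+e_o(0)^2+\abs{k_p}\Theta_\text{max}^2)$ leaves a coefficient that is a sum of an $\ell^2/g_\theta^2$ term and an $\eta/\abs{k_p}$ term, each multiplied by system constants, which is precisely the form of $\alpha$.

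The hard part will be packaging the residual terms of Theorem \ref{xob} — the bounded/decaying piece $c_{10}$, the finite exponential constants $a_3,a_4$, and the initial value $\norm{x_o(t_3)}$ (which is itself controlled by Lemma \ref{lem6}) — into a single constant $c_{11}$ that multiplies $\ell^2/g_\theta^2$ and is independent of $\gamma$ and $g_\theta$. This hinges on the observation that, since $g_\theta=a_m+\ell+\abs{k_p\theta^*}$, the ratio $\ell^2/g_\theta^2$ is bounded below by a positive constant for all $\ell\leq\ell^*$ (tending to $1$ as $\ell\to-\infty$); consequently the $\gamma,g_\theta$-independent portion of $\norm{x_o(t)}_{t\geq t_3}^2$ can be dominated by the $\ell^2/g_\theta^2$ term and absorbed into $c_{11}$, completing the bound.
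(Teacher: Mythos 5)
Your proposal is correct and takes essentially the route the paper intends: the paper states Theorem \ref{thm8} without an explicit proof, but the template is clearly that of Theorem \ref{thm3}, and your reconstruction---projection non-expansiveness plus the triangle inequality, the $(a+b)^2\leq 2a^2+2b^2$ split, the squared $L_2$ bounds of Lemma \ref{lem7} under $\gamma=\abs{g_\theta}$, and the supremum bound on $x_o$ from Theorem \ref{xob}---is exactly that argument. Your closing observation, that $\ell^2/g_\theta^2$ is bounded below by a positive constant on the admissible range of $\ell$ so the $\ell$-independent residue $c_{10}$ can be absorbed into $c_{11}$, correctly fills the one step the paper leaves implicit.
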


\begin{rem}
Note that \ben\lim_{\ell\to-\infty}\frac{\ell^2}{g_\theta^2}=1.\een Thus for large $\abs \ell$ the truncated L-2 norm of $\dot\theta$ is simply a function of the initial conditions of the system and the tuning parameter $\eta$.
\end{rem}

\begin{rem} The similarity of the bounds in Theorem \ref{thm8} to those in Theorem \ref{thm3} implies that the same bounds on frequencies and corresponding amplitudes of the overall adaptive system hold here in the CMRAC-CO case as well.\end{rem}

\subsection{Robustness of CMRAC--CO to Noise }
As mentioned earlier, the benefits of the CMRAC--CO is the use of the observer state $x_o$ rather than the actual plant state $x$. This implies that the effect of any measurement noise on system performance can be reduced. This is explored in this section and Section \ref{cmracsim}. 

Suppose that the actual plant dynamics is modified from \eqref{eqd:plant} as
\be\label{p:n}
\dot x_a(t)= a_p x_a(t)+k_p u(t), \quad x_p(t)=x_a(t)+n(t)
\ee
where $n(t)$ represents a time varying disturbance. For ease of exposition, we assume that $n\in \mathcal C_1$.

This leads to a set of modified error equations
\be
\begin{split}\label{cne}
\dot e_m(t) =& (a_m+\ell) e_m +k_p \tilde \theta (t)x_o +k_p\theta^*e_o  + \xi(t) \\
\dot e_o(t) =& (a_m+\ell-k_p\theta^*) e_o -k_p \bar \theta(t)x_o -\xi(t)
\end{split}
\ee
where
\be
\xi(t) \triangleq \dot \eta(t) - a_p \eta(t)
\ee
\begin{thm}
The adaptive system with the plant in \eqref{p:n}, the controller defined by \eqref{ccontrol}, the update law in \eqref{cupdate} with the reference model as in \eqref{eqd:reference}, and $\ell$ chosen such that $a_m+\ell+2\abs{k_p}\abs{\theta^*}<0$, all trajectories are bounded and 
\be\begin{split}\label{eq:59}
V(t) \leq & \frac{1}{2}\left({e_m(0)}^2+e_o(0)^2\right) \exp{-2\abs{g_n}t} \\ &+  \frac{\abs{k_p}}{\gamma}\Theta_\text{max}^2+ \frac{1}{4\abs{g_n}^2} \norml{\xi(t)}\infty^2.
\end{split}\ee
where 
\be g_n \triangleq a_m+\ell + 2\abs{k_p}\abs{\theta^*}.\ee
\end{thm}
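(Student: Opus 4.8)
The plan is to mirror the noise-free CMRAC-CO analysis that produced \eqref{vdc}-\eqref{gbstate} and to isolate the single new contribution caused by the disturbance term $\xi$. I would reuse the Lyapunov candidate $V$ of \eqref{clyap} and differentiate it along the perturbed error dynamics \eqref{cne}. Exactly as in the derivation of \eqref{vdc}, the projection operator in \eqref{cupdate} makes $\tfrac{\abs{k_p}}{\gamma}\tilde\theta\dot{\tilde\theta}$ and $\tfrac{\abs{k_p}}{\gamma}\bar\theta\dot{\bar\theta}$ no larger than the corresponding unprojected expressions, so the regressor cross terms $k_p\tilde\theta x_o e_m$ and $-k_p\bar\theta x_o e_o$ cancel against the $e_m\dot e_m$ and $e_o\dot e_o$ contributions, the $\eta$-coupling collapses to $-\tfrac{\eta\abs{k_p}}{\gamma}\epsilon_\theta^2\le 0$ (using $\bar\theta-\tilde\theta=-\epsilon_\theta$), and I am left with
\be
\dot V \le (a_m+\ell)e_m^2 + (a_m+\ell-k_p\theta^*)e_o^2 + k_p\theta^* e_m e_o + \xi(e_m-e_o).
\ee

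Next I would bound the remaining $\theta^*$-terms by $-k_p\theta^* e_o^2\le\abs{k_p\theta^*}e_o^2$ and $k_p\theta^* e_m e_o\le\abs{k_p\theta^*}\abs{e_m}\abs{e_o}$. This is where the strengthened hypothesis $g_n=a_m+\ell+2\abs{k_p}\abs{\theta^*}<0$ earns its keep: it is precisely the \emph{extra} factor of $\abs{k_p\theta^*}$ relative to the noise-free condition $g_\theta<0$ in \eqref{gt} that leaves a strictly negative-definite surplus in the $(e_m,e_o)$ quadratic form beyond what is needed to realize the decay rate $2\abs{g_n}$. I would spend that surplus on the disturbance, dominating the indefinite coupling $\xi(e_m-e_o)$ via Young's inequality $\pm\xi e_\bullet\le\alpha e_\bullet^2+\tfrac{1}{4\alpha}\xi^2$ with $\alpha$ tuned to the available margin, so that the net effect is a negative multiple of $e_m^2+e_o^2$ plus a multiple of $\norml{\xi}\infty^2$. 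Collecting terms produces a pointwise inequality of the form
\be
\dot V \le g_n(e_m^2+e_o^2) + \frac{1}{2\abs{g_n}}\norml{\xi}\infty^2,
\ee
where the exact disturbance gain is fixed by the chosen Young parameter.

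I would then convert this into the standard linear differential inequality exactly as in passing from \eqref{gb1} to \eqref{ert}. Writing $e_m^2+e_o^2 = 2V-\tfrac{\abs{k_p}}{\gamma}(\tilde\theta^2+\bar\theta^2)$ and replacing $\tilde\theta^2,\bar\theta^2$ by their projection bound $\Theta_\text{max}^2$, the quadratic term becomes $-2\abs{g_n}V+2\abs{g_n}\tfrac{\abs{k_p}}{\gamma}\Theta_\text{max}^2$, giving
\be
\dot V \le -2\abs{g_n}V + 2\abs{g_n}\frac{\abs{k_p}}{\gamma}\Theta_\text{max}^2 + \frac{1}{2\abs{g_n}}\norml{\xi}\infty^2.
\ee
Applying the Gronwall-Bellman lemma as in the derivation of \eqref{gbstate}, and bounding the transient coefficient $V(0)-\tfrac{c}{2\abs{g_n}}$ by $\tfrac12(e_m(0)^2+e_o(0)^2)$ (which dominates it because $\tilde\theta(0)^2,\bar\theta(0)^2\le\Theta_\text{max}^2$), yields \eqref{eq:59} with the stated disturbance offset $\tfrac{1}{4\abs{g_n}^2}\norml{\xi}\infty^2$. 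Boundedness of all signals follows in the usual cascade: $V\in\mathcal L_\infty$ forces $e_m,e_o\in\mathcal L_\infty$ and, through projection, $\tilde\theta,\bar\theta\in\mathcal L_\infty$; the stable reference model \eqref{eqd:reference} keeps $x_m$ bounded, whence $x_p=x_m+e_m$, $x_o=x_p+e_o$ and $x_a=x_p-n$ are bounded, and $\xi\in\mathcal L_\infty$ since $n\in\mathcal C^1$ with bounded value and derivative.

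The step I expect to be the main obstacle is the Young's-inequality balancing in the second paragraph: I must dominate the indefinite, sign-indefinite coupling $\xi(e_m-e_o)$ while simultaneously preserving the decay rate $2\abs{g_n}$ and delivering the precise gain in \eqref{eq:59}. This is feasible only because the hypothesis reserves an $\abs{k_p\theta^*}$-sized margin of negativity; had one retained only $g_\theta<0$ as in the noise-free case, there would be no surplus to complete the square against $\xi$, and the perturbed interconnection in \eqref{cne} could fail to remain bounded. Everything else is a routine repetition of the Lyapunov and Gronwall-Bellman bookkeeping already carried out in Section~\ref{cmracco} and in the template \eqref{gb1}-\eqref{ert}.
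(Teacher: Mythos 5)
Your proposal follows essentially the same route as the paper's own proof: differentiate the Lyapunov candidate \eqref{clyap} along the perturbed error dynamics \eqref{cne}, use the strengthened condition $g_n<0$ to obtain $\dot V \leq g_n(e_m^2+e_o^2)+\xi(e_m-e_o)+\dots$, absorb the indefinite $\xi$ coupling by completing the square (the paper) or, equivalently, Young's inequality (you), and finish with Gronwall--Bellman. The only differences are constant-factor bookkeeping in how the negativity margin is split between preserving the decay rate and dominating $\xi$ --- a looseness the paper's own proof shares (it halves $\abs{g_n}$ in its square-completion yet still quotes the $2\abs{g_n}$ decay rate) --- so this is not a substantive divergence.
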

\begin{proof}
Taking the time derivative of $V$ in \eqref{clyap} results in \be
\begin{split}\dot V \leq&   g_n \left(\norm{e_m}^2+\norm{e_o}^2\right) -\abs{k_p}\frac\eta\gamma\epsilon_\theta^2 \\&+ \norm{\xi(t)}\norm{e_m(t)} + \norm{\xi(t)}\norm{e_o(t)}.
\end{split}\ee
completing the square in $\norm{e_m}\norm n$ and $\norm{e_o} \norm{n}$
\ben\begin{split}
\dot V \leq & -\abs{g_n}/2 \left(\norm{e_m}^2 +\norm{e_o}^2\right) -\abs{k_p}\frac\eta\gamma\epsilon_\theta^2 \\ 
&-\abs{g_n}/2\left(\norm{e_m} -  1/ \abs{g_n} \norm{\xi(t)}\right)^2 \\  
&-\abs{g_n}/2 \left(\norm{e_o} -  1/ \abs{g_n}  \norm{\xi(t)}\right)^2 \\
& + 1/(4\abs{g_n}) \norm{\xi(t)}^2.
\end{split}
\een
Neglecting the negative terms in lines 2 and 3 from above and the term involving $\epsilon_\theta$ we have that
\ben
\dot V \leq  -\abs{g_n}/2 \left(\norm{e_m}^2 +\norm{e_o}^2\right) + 1/(4\abs{g_n}) \norm{\xi(t)}^2,
\een
and in terms of $V$ gives us
\ben
\dot V \leq -\abs{g_n} V +\frac{1}{2}  \frac{\abs{k_p} \abs{g_n}}{\gamma}\left(\tilde\theta^2+\bar\theta^2\right)+ \frac{1}{4\abs{g_n}}\norm{\xi(t)}^2.
\een
Using the Gronwall-Bellman Lemma and substitution of $V(t)$ leads to the bound in \eqref{eq:59}.
\end{proof}

\subsection{Simulation Study} \label{cmracsim}
For this study a scalar system in the presence of noise is to be controlled with dynamics as presented in \eqref{p:n}, where  $n(t)$ is a deterministic signal used to represent sensor noise. $n(t)$ is generated from a Gausian distribution with variance 1 and covariance 0.01, deterministically sampled using a fixed seed at 100 Hz, and then passed through a saturation function with upper and lower bounds of 0.1 and -0.1 respectively. For the CMRAC-CO systems the reference model is chosen as \eqref{eqd:reference} with the rest of the controller described by \eqref{cxo}-\eqref{cupdate}. 

The CMRAC system used for comparison is identical to that in \cite{duarte1989combined}. For CMRAC the reference dynamics are now chosen as $x_m^o$  in \eqref{orm}, the observer is the same as CMRAC-CO \eqref{cxo}. 

Further differences arise with the control law being chosen as
\ben
u=\theta x_p + k^*r
\een 
The open-loop error $e^o=x_p-x_m^o$ updates the direct adaptive component, with the regressor becoming $x_p$ instead of $x_o$ for both $\theta$ and $\hat\theta$ update laws:
\be\begin{split}\label{arse}
\dot \theta & = \text{Proj}_{\Omega}  (-\gamma\text{sgn}{(k_p)} e_{m}^o x_p,\theta)  - \eta\epsilon_\theta\\
\dot {\hat \theta} & = \text{Proj}_{\Omega} ( \gamma\text{sgn}(k_p) e_o x_p,\hat \theta)+\eta  \epsilon_\theta.
\end{split}\ee
The complete CMRAC and CMRAC-CO systems are given in Table I with the design parameters given in Table II.


\begin{figure}[t]
\centering
\psfrag{e}[cc][cc][.9]{$e$}
\psfrag{x}[cc][cc][.9]{$x$}
\psfrag{t}[cc][cc][.8]{$t$}
\psfrag{r}[cc][cc][.9]{$x_m$}
\psfrag{openloop}[cl][cl][.8]{open--loop}
\psfrag{closedloop}[cl][cl][.8]{closed--loop}
\psfrag{closedloope0}[cl][cl][.8]{$e^o$: closed--loop}
\psfrag{c1}[cc][cl][.8]{Region 1}

\psfrag{c2}[cc][cl][.8]{Region 2}

\psfrag{c3}[cc][cl][.8]{Region 3}
\includegraphics[width=3.4in]{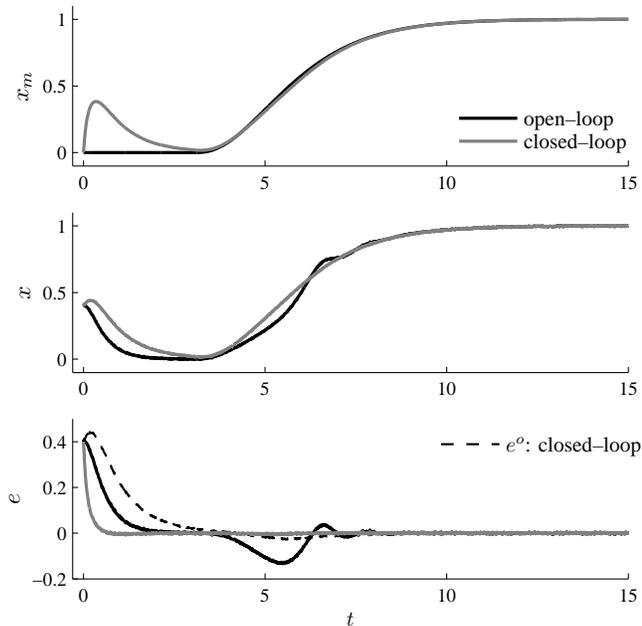}
\caption{(top) reference model trajectories $x_m$, (middle) state $x$, and (bottom)
model following $e$.}\label{fig:ob1}
\end{figure}

\begin{figure}[h]
\centering
\psfrag{u}[cc][cc][.9]{$u$}
\psfrag{u1}[cc][cc][.9]{$\frac{\Delta u}{\Delta t}$}
\psfrag{t}[cc][cc][.8]{$t$}
\psfrag{r}[cc][cc][.8]{$x_m$}
\psfrag{h1}[cc][cc][.8]{$\theta$}

\psfrag{h2}[cc][cc][.8]{$\hat\theta$}
\psfrag{c1}[cc][cl][.8]{Region 1}

\psfrag{c2}[cc][cl][.8]{Region 2}

\psfrag{c3}[cc][cl][.8]{Region 3}

\psfrag{openloop}[cl][cl][.8]{open--loop}
\psfrag{thetastar}[cl][cl][.8]{$\theta^*(t)$}
\psfrag{closedloop}[cl][cl][.8]{closed--loop}
\includegraphics[width=3.4in]{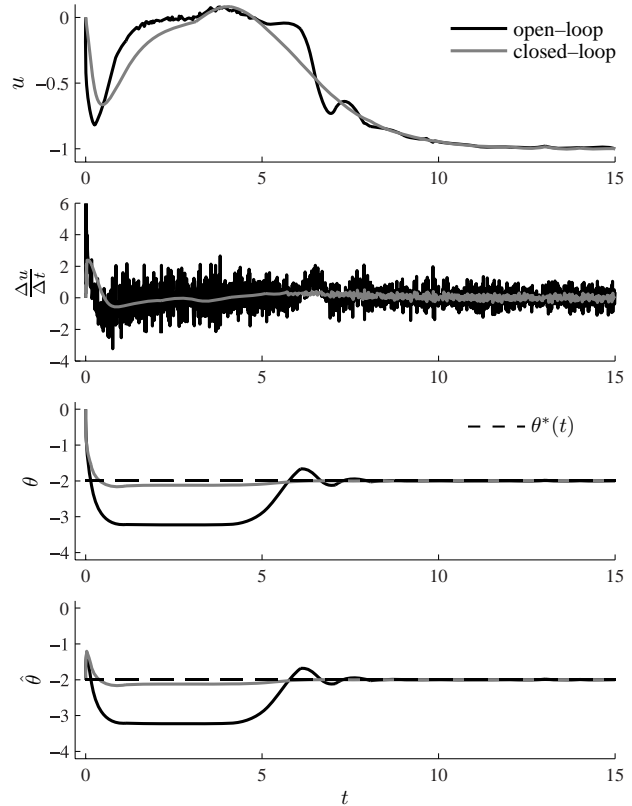}
\caption{(top) Control input $u$, (middle--top) discrete rate of change of control input $\Delta u/ \Delta t$, (middle--bottom) adaptive parameter $\theta(t)$ and (bottom) adaptive parameter $\hat\theta(t)$.}\label{fig:ob2}
\end{figure}

\begin{table}[h]
\caption{Test Case Equations} \label{baus}
\centering
\begin{tabular}{l c c}
  Paramater & CMRAC & CMRAC-CO  \\ \hline
  Reference Dynamics  & Equation \eqref{orm} & Equation \eqref{eqd:reference} \\
  Observer Dynamics & Equation \eqref{cxo} & Equation \eqref{cxo} \\
  Reference Error & $e_m^o=x_p-x_m^o$ & $e_m = x_p-x_m$ \\
  Observer Error & $e_o=x_o-x_p$ & $e_o = x_o-x_p$ \\
  Input & $u=\theta x_p + r$ & $u=\theta x_o + r$ \\
  Update Laws & Equation \eqref{arse}  & Equation \eqref{cupdate} \\ \hline
\end{tabular}  \end{table}

\begin{table}[h]
\caption{Simulation Parameters} \label{baus}
\centering
\begin{tabular}{l c c}
  Paramater & Value  \\ \hline
  $a_p$ & 1 \\
  $k_p$ & 1  \\
    $a_m$ & -1  \\
    $k_m$ & 1  \\
    $\ell$ & -10 \\
    $\gamma$ & 100  \\
    $\eta$ & 1  \\ \hline
\end{tabular}  \end{table}

The simulations have two distinct regions of interest, with Region 1 denoting the first 4 seconds, Region 2 denoting the 4 sec to 15 sec range. In Region 1, the adaptive system is subjected to non--zero initial conditions in the state and the reference input is zero. At $t=4$ sec, the beginning of Region 2, a filtered step input is introduced. Figures \ref{fig:ob1} and \ref{fig:ob2} illustrate the response of the CMRAC--CO adaptive system over 0 to 15 seconds, with $x_m$, $x$, and $e_m$ indicated in Figure \ref{fig:ob1}, and $u$, $\Delta u/ \Delta t$, $\theta$ and $\hat\theta$ indicated in Figure \ref{fig:ob2}. The addition of sensor noise makes the output $x_p$ not differentiable and therefore we use the discrete difference function $\Delta$ to obtain the discrete time derivative of the control input, where
\ben
\frac{\Delta u}{\Delta t} \triangleq \frac{u(t_{i+1})-u(t_i)}{t_{i+1}-t_i}, \qquad t_{i+1}-t_i = 0.01.
\een

In both cases, the resulting performance is compared with the classical CMRAC system. The first point that should be noted is a satisfactory behavior in the steady-state of the CMRAC--CO adaptive controller.  We note a significant difference between the responses of CMRAC--CO and CMRAC systems, which pertains to the use of filtered regressors in CMRAC--CO. An examination of $\Delta u/ \Delta t$ in Figure \ref{fig:ob2} clearly illustrates the advantage of CMRAC--CO.

\subsection{Comments on CMRAC and CMRAC--CO}\label{sec:talker}

As discussed in the Introduction, combining indirect and direct adaptive control has always been observed to produce desirable transient response in adaptive control. 
While the above analysis does not directly support the observed transient improvements with CMRAC, we provide a few speculations below: The free design parameter $\ell$ in the identifier is typically chosen to have eigenvalues faster than the plant that is being controlled. Therefore the identification model following error $e_i$ converges rapidly and $\hat\theta(t)$ will have smooth transients. It can be argued that the desirable transient properties of the identifier pass on to the direct component through the tuning law, and in particular $\epsilon_\theta$.

\section{CRMs in other Adaptive Systems}

While CRMs can be traced to \cite{lee97} in the context of direct model reference adaptive control, such a closed loop structure has always been present in, adaptive observers, tuning function designs, and in a similar fashion in adaptive control of robots. These are briefly described in the following sections.

\subsection{Adaptive Backstepping with Tuning Functions}
The control structure presented here is identical to that presented in \cite[\S 4.3]{kkkbook}. Consider the unknown system
\be
\begin{split}
\dot x_1 & = x_2 + \varphi_1(x_1)^T\theta^*\\
\dot x_1 & = x_3 + \varphi_2(x_1,x_2)^T\theta^*\\
&\ \, \vdots \\
\dot x_{n-1} & = x_n + \varphi_{n-1}(x_1,\ldots,x_{n-1})^T\theta^*\\
\dot x_n & = \beta(x) u  + \varphi_n(x)^T\theta^*
\end{split}
\ee
where $\theta^*$ is an unknown column vector, $\beta(x)$ is known and invertible, the $\varphi_i$ are known, $x$ is the state vector of the scalar $x_i$ and the goal is to have $y=x_1$ follow a desired ${n}$ times differentiable $y_r$. The control law propped in \cite{kkkbook} is of the form
\be
u=\frac{1}{\beta(x)}\left(\alpha_n + y_r^{(n)}\right)
\ee
with an update law
\be
\dot\theta = \Gamma W z
\ee
where ${\Gamma=\Gamma^T>0}$ is the adaptive tuning parameter,  $z$ is the transformed state error, and ${W= \tau_{n}(z,\theta)}$ with $\tau_i$ and the ${\alpha_i}$, ${1\leq i\leq n}$ defined in \eqref{app1} in the Appendix along with the rest of the control design. The closed loop system reduces to 
\be
\dot z = A_z(z,\theta,t) z + W(z,\theta,t) \tilde\theta
\ee
where $\tilde\theta=\theta-\theta^*$ and
\ben
A_z=\begin{footnotesize}\bb -c_1 & 1 & 0 &\cdots & 0 \\
-1 & -c_2 & 1+\sigma_{23} & \cdots & \sigma_{2n} \\
0 & -1 -\sigma_{23} &\ddots & \ddots & \vdots \\
\vdots & \vdots & \ddots & \ddots & 1+\sigma_{n-1,n} \\
0 & -\sigma_{2n} & \cdots & -1-\sigma_{n-1,n}& -c_n
\eb\end{footnotesize}\een
where $\sigma_{ik}=-\frac{\partial\alpha_{i-1}}{\partial\theta}\Gamma w_k$. The $c_i$ are free design parameters that arise in the definition of the $\alpha_i$ as defined in \eqref{app1} in the Appendix. Notice that the $-c_i$ act in the same way as the $\ell$ in the simple adaptive system first presented in the reference model in \eqref{eqd:reference}. They act to close the reference trajectories with the plant state. The above system also results in similar L-2 norms for the $z$ error state. Consider the Lyapunov candidate 
\be
V(z(t),\tilde\theta(t))= \frac{1}{2}z^Tz + \frac{1}{2}\tilde\theta^T \Gamma^{-1} \tilde\theta
\ee
which results in a negative semidefinite derivative $\dot V \leq - c_0\norm{z}^2$ where $c_0=\min_{1\leq i\leq n}{(c_i)}$. Thus we can integrate $-\dot V$ to obtain the following bound on the L-2 norm of $z$
\be
\norml{z(t)}2^2 \leq \frac{V(0)}{c_0}.
\ee
It is addressed in \cite[\S 4.4.1]{kkkbook} that while it may appear that increasing $c_0$ uniformly decreases the L-2 norm of $z$, choosing $c_i$ to be a large can result in large $z(0)$. The authors then provide a method for initializing the $z$ dynamics so that $z(0)=0$. We have already discussed why this may not be possible in a real system.

\subsection{Adaptive Control in Robotics}
The control structure presented here is taken directly from \cite[\S 9.2]{slobook}. Consider the dynamics of a rigid manipulator 
\be
H(q) \ddot q + C(q,\dot q) + g(q) = \tau
\ee
where $q$ is the joint angle, and $\tau$ is the torque input. It is assumed that the system can be parameterized as
\be
Y(q,\dot q, \dot q_r, \ddot q_r  ) a=  H(q) \ddot q_r + C(q,\dot q) \dot q_r +  g(q) 
\ee
where $q_r$ is a twice differentiable reference signal, $Y$ is known and $a$ is an unknown vector. The control law is chosen as 
\be
\tau =Y \hat a -k_d s \quad \text{and} \quad
\dot {\hat a} = -\Gamma Y^T s.
\ee
Then, defining the desired dynamics trajectory as $q_d$, the reference dynamics of the system are created by 
\be
\dot q_r = \dot q_d - \lambda \tilde q
\ee
where $\tilde q=q-q_d$ and 
\be
s = \dot q - \dot q_r = \dot{\tilde q} + \lambda \tilde q.
\ee

The stability of the above system can be verified with the following Lyapunov candidate, 
\ben
V= \frac{1}{2}\left( s^T H s + \tilde a^T \Gamma^{-1} \tilde a \right).
\een
Differentiating and using the property that $\dot H=C+C^T$ we have that
\be
\Dot V = - s^T k_d s.
\ee
We note that $\lambda$ has a similar role in this control structure as the $\ell$ in the CRM. The desired trajectory is $q_d$ (like $x_m^o$ in our examples), however the adaptive parameter is updated by the composite variable $s$ instead of directly adjusted by the true reference error. 


We now conjecture as to why closed-loop reference models have not been studied in direct adaptive control until recently. In the two cases of nonlinear adaptive control, closed reference trajectory errors are used to update the adaptive controller. This is performed in the tuning function approach through the selection of the $c_i$ and in the adaptive robot control example with $\lambda$ and the creation of the composite variable $s$. In both cases the stability of the system necessitates the introduction of these variables. In contrast, in model reference adaptive control, stability is derived from the inherent stability of the reference model and hence any addition of new variables becomes superfluous. When no reference model is present, closing the loop on the reference trajectory becomes necessary. With the recent focus on improving transients in adaptive systems, CRM now has a role in MRAC. And as pointed out in this paper, improved transients can result with CRM without introducing peaking by choosing the ratio $\abs\ell /\gamma$ carefully.

\section{Conclusions}
An increasingly oscillatory response with increasing adaptation gain is a transient characteristic that is ubiquitous in all adaptive systems. Recently, a class of adaptive systems has been investigated with closed-loop reference models where such oscillatory response can be minimal. In this paper, a detailed analysis of such adaptive systems is carried out. It is shown through the derivation of analytical bounds on both states of the adaptive system and on parameter derivatives that a phenomenon of peaking can occur with CRMs and that this phenomenon can be curtailed through a combination of design and analysis, with the peaking exponent reduced from 0.5 to zero. In particular, it is shown that bounds on the parameter derivatives can be related to bounds on frequencies and corresponding amplitudes, thereby providing an analytical basis for the transient performance. This guarantees that the resulting adaptive systems have improved transient characteristics with reduced oscillations even as the adaptation gains are increased. CRMs are shown to be implicitly present in other problems including composite control, adaptive nonlinear control and a class of problems in robotics.

\bibliographystyle{IEEEtran}
\bibliography{ref}

\appendices

\section{Tuning Function Parameter Definitions}
\be\label{app1}
\begin{split}
z_i  =& x_i - y_r^{(i-1)}-\alpha_{i-1} \\
\alpha_i =&  -z_{i-1}-c_i z_i - w_i^T \theta \\&+ \sum_{k=1}^{i-1} \left( \frac{\partial \alpha_{i-1}}{\partial x_k}+\frac{\partial \alpha_{i-1}}{\partial y_r^{(k-1)}} y_r^{(k)} \right) \\
&+ \frac{\partial \alpha_{i-1}}{\partial \theta}\Gamma \tau_i + \sum_{k=2}^{i-1}\frac{\partial \alpha_{k-1}}{\partial \theta}\Gamma w_i z_k \\
\tau_i =& \tau_{i-1} + w_i z_i \\
w_i = & \varphi_i - \sum_{k=1}^{i-1}\frac{\partial \alpha_{i-1}}{\partial x_k} \varphi_k
\end{split}
\ee

\end{document}